\newtheorem{theorem}{Theorem}[section]
\newtheorem{lemma}[theorem]{Lemma}
\newtheorem{definition}[theorem]{Definition}}
\newcommand{\myparagraph}[1]{\paragraph{#1.}}
\newtheorem{observation}[theorem]{Observation}
\newcommand{\myqedsymbol}{\rule{2mm}{2mm}}
\newenvironment{proof}{\trivlist\item[]\emph{Proof}:}%
               {\unskip\nobreak\hskip 1em plus 1fil\nobreak%
                        \myqedsymbol%$\Box$
                        \parfillskip=0pt%
                        \endtrivlist}
\newcommand{\Do}{{\small\bf do}\ }%
\newcommand{\Return}{{\small\bf return\ }}%
\newcommand{\If}{{\small\bf if}\ }%
\newcommand{\Then}{{\small\bf then}\ }%
\newcommand{\Else}{{\small\bf else}\ }%
\newcommand{\For}{{\small\bf for}\ }
\newcommand{\xbeginlgox}{\begin{minipage}{1in}\begin{tabbing}
           \quad\=\qquad\=\qquad\=\qquad\=\qquad\=\qquad\=\qquad\=\kill}
        \newcommand{\xendlgox}{\end{tabbing}\end{minipage}}
\newenvironment{algorithm}{\begin{tabular}{|l|}\hline\xbeginlgox}
    {\xendlgox\\\hline\end{tabular}} 
\newcommand{\AlgorithmI}[1]{{\textcolor[named]{RedViolet}{\texttt{\bf{#1}}}}}
\newcommand{\Algorithm}[1]{{\AlgorithmI{#1}\index{algorithm!#1@{\AlgorithmI{#1}}}}}
\providecommand{\Comment}[1]{\textcolor[named]{RedViolet}{\texttt{{\bf //} #1}}}
\newcommand{\SarielThanks}[1]{%
   \thanks{Department of Computer Science; University of Illinois; 201
      N. Goodwin Avenue; Urbana, IL, 61801, USA; {\tt
         \url{http://www.illinois.edu/\string~sariel/}.}%
      #1}}
\newcommand{\NirmanThanks}[1]{\thanks{Department of Computer Science;
      University of Illinois; 201 N. Goodwin Avenue; Urbana, IL,
      61801, USA; {\tt \si{nkumar5}\atgen{}illinois.edu}; {\tt
         \url{http://www.cs.illinois.edu/\string~\si{nkumar5}/}.}
      #1}} \newcommand{\atgen}{\symbol{'100}}
\newtheorem{remark}[theorem]{Remark}
\definecolor{blue25}{rgb}{0,0,0.65}
\newcommand{\emphic}[2]{%
     \textcolor{blue25}{%
         \textbf{\emph{#1}}}%
         \index{#2}}
\newcommand{\emphi}[1]{\emphic{#1}{#1}}
\providecommand{\lemlab}[1]{\label{lemma:#1}}
\providecommand{\lemref}[1]{Lemma~\ref{lemma:#1}}
\newcommand{\obslab}[1]{\label{observation:#1}}
\newcommand{\obsref}[1]{Observation~\ref{observation:#1}}
\providecommand{\lempntlab}[1]{\label{lemmapnt:#1}}
\providecommand{\lempntref}[1]{(\ref{lemmapnt:#1})}
\newcommand{\figlab}[1]{\label{fig:#1}}
\newcommand{\figref}[1]{Figure~\ref{fig:#1}}
\newcommand{\seclab}[1]{\label{sec:#1}}
\newcommand{\secref}[1]{Section~\ref{sec:#1}}
\newcommand{\subseclab}[1]{\label{subsec:#1}}
\newcommand{\subsecref}[1]{Subsection~\ref{subsec:#1}}
\newcommand{\remlab}[1]{\label{rem:#1}}
\newcommand{\remref}[1]{Remark~\ref{rem:#1}}
\newcommand{\thmlab}[1]{{\label{theo:#1}}}
\newcommand{\thmref}[1]{Theorem~\ref{theo:#1}}
\newcommand{\eqnlab}[1]{\label{equation:#1}}
\newcommand{\eqnref}[1]{(\ref{equation:#1})}
\newcommand{\eqlab}[1]{\label{equation:#1}}
\newcommand{\Eqref}[1]{Eq.~(\ref{equation:#1})}
\renewcommand{\Re}{{\rm I\!\hspace{-0.025em} R}}
\newcommand{\pth}[2][\!]{#1\left({#2}\right)}
\newcommand{\pbrc}[2][\!\!]{#1\left[ {#2} \MakesBig \right]}
\newcommand{\diameterX}[1]{\mathrm{d{}i{}am}\pth{#1}}
\newcommand{\diameterY}[2]{\mathrm{d{}i{}am}_{#1}\pth{#2}}
\newcommand{\Term}[1]{\textsf{#1}}
\newcommand{\TermI}[1]{\Term{#1}\index{#1@\Term{#1}}}
\newcommand{\aftermathA}{\par\vspace{-\baselineskip}}
\newcommand{\ceiling}[1]{\left \lceil {#1} \right \rceil}
\newcommand{\normX}[2]{\left\| {#1} - {#2}  \right\|}
\newcommand{\abs}[1]{\left | {#1} \right |}
\newcommand{\nnA}{\mathsf{n}_{\query}}
\newcommand{\algonnA}{y}
\newcommand{\distC}{\overline{r_1}}
\newcommand{\annAB}{\overline{\annA\!}}
\newcommand{\annBB}{\overline{\algoonnB\!}\,}
\newcommand{\closest}[1]{\alpha( {#1} )}
\newcommand{\MakeBig}{\rule[-.2cm]{0cm}{0.4cm}}
\newcommand{\MakesBig}{\rule[0.0cm]{0.0cm}{0.3cm}} % really small
\newcommand{\MakeSBig}{\rule[0.0cm]{0.0cm}{0.35cm}} % really small
\newcommand{\brc}[1]{\left\{ {#1} \right\}}
\newcommand{\sep}[1]{\,\left|\, {#1} \MakeBig\right.}
\newcommand{\centerset}{\mathcal{C}}
\newcommand{\centernettree}{\mathcal{N}_{\mathcal{C}}}
\newcommand{\dd}{\tau}
\newcommand{\ballC}{\mathsf{b}}
\newcommand{\ball}[2]{\mathsf{ball}\pth{ {#1}, {#2} }}
\newcommand{\ballExt}[3]{\mathsf{ball}_{#1}\pth{ {#2}, {#3} }}
\newcommand{\distM}[3]{\mathsf{d}_{{#1}}\pth{{#2},{#3}}}
\newcommand{\dist}[3][\!]{\mathsf{d}\pth[#1]{{#2},{#3}}}
\newcommand{\order}[2][\!]{O\pth[#1]{ {#2} }}
\newcommand{\aproxOrder}[1]{\widetilde{O}\pth{#1}}
\newcommand{\ordereq}[1]{\Theta \left ( {#1} \right )}
\newcommand{\ordergeq}[1]{\Omega \left ( {#1} \right )}
\providecommand{\si}[1]{#1}
\newcommand{\eps}{{\varepsilon}}%
\newcommand{\annA}{y_1}
\newcommand{\annAf}{f_1}
\newcommand{\algoonnB}{y_2}
\newcommand{\distA}{r}
\newcommand{\distB}{\rho_1}
\newcommand{\query}{\mathtt{q}}
\newcommand{\RegionA}{C}
\newcommand{\WSPD}{\TermI{WSPD}\xspace}
\newcommand{\ANN}{\TermI{ANN}\xspace}
\newcommand{\JL}{\TermI{JL}\xspace}
\newcommand{\AVD}{\TermI{AVD}\xspace}
\newcommand{\etal}{\textit{et~al.}\xspace}
\renewcommand{\Re}{{\rm I\!\hspace{-0.025em} R}}
\newcommand{\diameter}[1]{\mathsf{diam}\pth{ {#1} }}
\newcommand{\porder}[1]{O\pth[]{ {#1} }}
\newcommand{\nfrac}[2]{{#1}/{#2}}
\newcommand{\mapped}[1]{#1'}
\newcommand{\PntSet}{\mathsf{P}}
\newcommand{\PntSetB}{\mathsf{Q}}
\newcommand{\PntSetC}{\mathsf{R}}
\newcommand{\queryB}{\overline{\query}}
\newcommand{\nnB}{\mathsf{n}_{\queryB}}
\providecommand{\remove}[1]{}
\newcommand{\WSPDRep}{\mathcal{W}}
\newcommand{\pDist}{t}
\newcommand{\PDist}{T}
\newcommand{\lDist}[1]{\mathsf{l}_{#1}}
\newcommand{\LDist}[1]{\mathsf{L}_{#1}}
\newcommand{\cardin}[1]{\left\lvert {#1} \right\rvert}
\newcommand{\NNbrC}{\mathsf{n{}n}}
\newcommand{\NNbrInSet}[2]{\NNbrC\pth{#1,#2}}
\newcommand{\algANN}{\Algorithm{\si{algANN}}\xspace}
\newcommand{\algBuildANN}{\Algorithm{\si{algBuildANN}}}
\newcommand{\algNet}{\Algorithm{\si{compNet}}}
\newcommand{\algBuildAVD}{\Algorithm{\si{algBuildAVD}}}
\newcommand{\height}[1]{\mathsf{h}\pth{ #1 }}
\newcommand{\hMax}[1]{\mathsf{h}_{\max} \pth{#1}}
\newcommand{\brad}[1]{R\pth{ #1 }}
\newcommand{\RegionSet}{\EuScript{R}}
\newcommand{\QuerySetA}{\EuScript{U}}
\newcommand{\QuerySetB}{\EuScript{V}}
\newcommand{\pntA}{\mathsf{p}}
\newcommand{\pntB}{v}
\newcommand{\pntC}{u}
\newcommand{\pntD}{z}
\newcommand{\pntE}{w}
\newcommand{\pntQ}{q}
\newcommand{\diam}{\mathsf{D}}
\DeclareMathAlphabet{\mathpzc}{OT1}{pzc}{m}{it}
\newcommand{\constA}{\mathpzc{c}_1}
\newcommand{\constB}{\mathpzc{c}_2}
\newcommand{\Net}{N}
\newcommand{\mtrA}{\mathcal{X}}
\newcommand{\manifold}{\mathcal{M}}
\newcommand{\mtrB}{\manifold'}
\newcommand{\cPnt}{\overline{\mathrm{c}}}
\newcommand{\Tree}{\EuScript{T}}
\newcommand{\reachX}[1]{\mathsf{s}\pth{#1}}
\newcommand{\DS}{\mathcal{D}}
\newcommand{\nettree}{T}
\newcommand{\vertexA}{v}
\newcommand{\vertexB}{u}
\newcommand{\lvl}[2][\!]{l\pth[#1]{#2}}
\newcommand{\parent}[2][\!]{\overline{p}\pth[#1]{#2}}
\newcommand{\nettreeconst}{\gamma}
\newcommand{\nettreerep}[1]{\mathrm{rep}_{#1}}
\newcommand{\Integers}{\mathbb{Z}}
\newcommand{\setA}{\mathsf{B}}
\newcommand{\num}{\psi}
\newcommand{\sA}{{A}}%
\newcommand{\sB}{{B}}%
\newcommand{\pa}{{a}}%
\newcommand{\pb}{{b}}%
\newcommand{\modA}{{A}^+}
\begin{document}

\title{Approximate Nearest Neighbor Search for Low Dimensional
   Queries%
   \thanks{Work on this paper was partially supported by a NSF AF
      award \si{CCF}-0915984. A preliminary version of this paper
      appeared in SODA 2011 \cite{hk-annsl-11}.}%
}%

\author{%
   Sariel Har-Peled%
   \SarielThanks{}%
   \and%
   Nirman Kumar%
   \NirmanThanks{}}

\date{\today}

\maketitle

\begin{abstract}
    We study the Approximate Nearest Neighbor problem for metric
    spaces where the query points are constrained to lie on a subspace
    of low doubling dimension, while the data is high-dimensional. We
    show that this problem can be solved efficiently despite the high
    dimensionality of the data.
\end{abstract}

\section{Introduction}
The nearest neighbor problem is the following. Given a set $\PntSet$
of $n$ data points in a metric space $\mtrA$, preprocess $\PntSet$,
such that given a query point $\query \in \mtrA$, one can find
(quickly) the point $\nnA \in \PntSet$ closest to $\query$. Nearest
neighbor search is a fundamental task used in numerous domains
including machine learning, clustering, document retrieval, databases,
statistics, and many others.

\myparagraph{Exact nearest neighbor}
The (exact) nearest neighbor problem has a naive linear time algorithm
without any preprocessing. However, by doing some nontrivial
preprocessing, one can achieve a sublinear search time for the nearest
neighbor. In $d$-dimensional Euclidean space (i.e., $\Re^d$) this is
facilitated by answering point location queries using a Voronoi
diagram \cite{bcko-cgaa-08}.  However, this approach is only suitable
for low dimensions, as the complexity of the Voronoi diagram is
$\ordereq{n^{\ceiling{d/2}}}$ in the worst case.  Specifically,
Clarkson \cite{c-racpq-88} showed a data-structure with query time
$\order{\log n }$ time, and $\order{n^{\ceiling{d/2} + \delta}}$
space, where $\delta > 0$ is a prespecified constant (the $O(\cdot)$
notation here hides constants that are exponential in the
dimension). One can tradeoff the space used and the query time
\cite{am-rsps-93}. Meiser \cite{m-plah-93} provided a data-structure
with query time $\order{d^5 \log n }$, which has polynomial dependency
on the dimension, where the space used is $\order{ n^{ d +
      \delta}}$. These solutions are impractical even for data-sets of
moderate size if the dimension is larger than two.

\myparagraph{Approximate nearest neighbor}
In typical applications, it is usually sufficient to return an
\emphi{approximate nearest neighbor} (\emphi{\ANN{}}). Given an $\eps
> 0$, a $(1 + \eps)$-\ANN to a query point $\query$, is a point
$\algonnA \in \PntSet$, such that
\begin{equation*}
    \dist{\query}{\algonnA} \leq (1 + \eps)\dist{\query}{\nnA},
\end{equation*}
where $\nnA \in \PntSet$ is the nearest neighbor to $\query$ in
$\PntSet$. Considerable amount of work was done on this problem, see
\cite{c-nnsms-06} and references therein.

In high dimensional Euclidean space, Indyk and Motwani showed that
\ANN can be reduced to a small number of near neighbor queries
\cite{im-anntr-98, him-anntr-12}. Next, using locality sensitive
hashing they provide a data-structure that answers \ANN queries in
time (roughly) $\aproxOrder{n^{1/(1+\eps)}}$ and preprocessing time
and space $\aproxOrder{n^{1+1/(1+\eps)}}$; here the
$\aproxOrder{\cdot}$ hides terms polynomial in $\log n$ and
$1/\eps$. This was improved to $\aproxOrder{n^{1/(1+\eps)^2}}$ query
time, and preprocessing time and space
$\aproxOrder{n^{1+1/(1+\eps)^2}}$ \cite{ai-nohaa-06,
   ai-nohaa-08}. These bounds are near optimal
\cite{mnp-lblsh-06,owz-olblsh-11}.

In low dimensions (i.e., $\Re^d$), one can use linear space
(independent of $\eps$) and get \ANN query time $O(\log n +
1/\eps^{d-1})$ \cite{amnsw-oaann-98, h-gaa-11}. Interestingly, for
this data-structure, the approximation parameter $\eps$ is not
prespecified during the construction; one needs to provide it only
during the query.  An alternative approach is to use Approximate
Voronoi Diagrams (\AVD), introduced by Har-Peled \cite{h-rvdnl-01},
which are partition of space into regions, desirably of low
complexity, typically with a representative point for each region that
is an \ANN for any point in the region. In particular, Har-Peled
showed that there is such a decomposition of size $O\pth{(n
   /\eps^d)\log^2 n}$, such that \ANN queries can be answered in
$O(\log (n/\eps))$ time.  Arya and Malamatos \cite{am-lsavd-02} showed
how to build \AVD{}s of linear complexity (i.e., $O(n/\eps^d)$). Their
construction uses Well Separated Pair Decompositions
\cite{ck-dmpsa-95}. Further tradeoffs between query and space for
\AVD{}s were studied by Arya \etal \cite{amm-sttan-09}.

\myparagraph{Metric spaces}
One possible approach for the more general case when the data lies in
some abstract metric space, is to define a notion of dimension and
develop efficient algorithms in these settings. This approach is
motivated by the belief that real world data is ``low dimensional'' in
many cases, and should be easier to handle than true high dimensional
data. An example of this approach is the notion of \emphi{doubling
   dimension} \cite{a-pldr-83, h-lams-01,gkl-bgfld-03}.  The
\emphi{doubling constant} of metric space $\mtrA$ is the maximum, over
all balls $\ballC$ in the metric space $\mtrA$, of the minimum number
of balls needed to cover $\ballC$, using balls with half the radius of
$\ballC$. The logarithm of the doubling constant is the
\emphi{doubling dimension} of the space.  The doubling dimension can
be thought of as a generalization of the Euclidean dimension, as
$\Re^d$ has doubling dimension $\Theta(d)$.  Furthermore, the doubling
dimension extends the notion of growth restricted metrics of Karger
and Ruhl~\cite{kr-fnngr-02}.

The problem of \ANN in spaces of low doubling dimension was studied
before, see \cite{kr-fnngr-02, hkmr-nnng-04}. Talwar \cite{t-beald-04}
presented several algorithms for spaces of low doubling
dimension. Some of them were however dependent on the spread of the
point set.  Krauthgamer and Lee \cite{kl-nnsap-04} presented a net
navigation algorithm for \ANN in spaces of low doubling dimension.
Har-Peled and Mendel \cite{hm-fcnld-06} provided data-structures for
\ANN search that use linear space and match the bounds known for
$\Re^d$ \cite{amnsw-oaann-98}. Clarkson \cite{c-nnsms-06} presents
several algorithms for nearest neighbor search in low dimensional
spaces for various notions of dimensions.

\myparagraph{\ANN in high and low dimensions}

As indicated above, the \ANN problem is easy in low dimensions (either
Euclidean or bounded doubling dimension). If the dimension is high the
problem is considerably more challenging. There is considerable work
on \ANN in high dimensional Euclidean space (see \cite{im-anntr-98,
   kor-esann-00, him-anntr-12}) but the query time is only slightly
sublinear if $\eps$ is close to $0$. In general metric spaces, it is
easy to argue that (in the worst case) the \ANN algorithm must compute
the distance of the query point to all the input points.

It is natural to ask therefore what happens when the data (or the
queries) come from a low dimensional subspace that lies inside a high
dimensional ambient space. Such cases are interesting as it is widely
believed that in practice real world data usually lies on a low
dimensional manifold (or is close to lying on such a manifold). Such
low-dimensionality arises from the way the data is being acquired,
inherent dependency between parameters, aggregation of data that leads
to concentration of mass phenomena, etc.

Indyk and Naor \cite{in-nnpe-07} showed that if the data is in high
dimensional Euclidean space, but lies on a manifold with low doubling
dimension, then one can do a dimension reduction into constant
dimension (i.e., similar in spirit to the \JL lemma
\cite{jl-elmih-84}), such that $(1+\eps)$-\ANN to a query point (the
query point might lie anywhere in the ambient space) is preserved with
constant probability. Using an appropriate data-structure on the
embedded space and repeating this process sufficient number of times
results in a data-structure that can answer such \ANN queries in
polylog time (ignoring the dependency on $\eps$).

\myparagraph{The problem}
In this paper, we study the ``reverse'' problem. Here we are given a
high dimensional data set $\PntSet$, and we would like to preprocess
it for \ANN queries, where the queries come from a low-dimensional
subspace/manifold $\manifold$. The question arises naturally when the
given data is formed by merging together a large number of data sets,
while the \ANN queries come from a single data set.

In particular, the conceptual question here is whether this problem is low
or high dimensional in nature. Note that direct dimension reduction as
done by Indyk and Naor would not work in this case. Indeed, imagine
the data lies densely on a slightly deformed sphere in high
dimensions, and the query is the center of the sphere. Clearly, a
random dimension reduction via projection into constant dimension
would not preserve the $(1+\eps)$-\ANN.

\myparagraph{Our results}
Given a point set $\PntSet$ lying in a general metric space $\mtrA$
(which is not necessarily Euclidean and is conceptually high
dimensional), and a subspace $\manifold$ having low doubling
dimension $\dd$, we show how to preprocess $\PntSet$ such that given any
query point in $\manifold$ we can quickly answer $(1+\eps)$-\ANN
queries on $\PntSet$.  In particular, we get data-structures of
(roughly) linear size that answer $(1+\eps)$-\ANN queries in (roughly)
logarithmic time.

Our construction uses ideas developed for handling the low dimensional
case.  Initially, we embed $\PntSet$ and $\manifold$ into a space with
low doubling dimension that (roughly) preserves distances between
$\manifold$ and $\PntSet$. We can use the embedded space to answer
constant factor \ANN queries. Getting a better approximation requires
some further ideas.  In particular, we build a data-structure over
$\manifold$ that is somewhat similar to approximate Voronoi diagrams
\cite{h-rvdnl-01}. By sprinkling points carefully on the subspace
$\manifold$ and using the net-tree data-structure \cite{hm-fcnld-06}
we can answer $(1+\eps)$-\ANN queries in time $O( \eps^{-O(\dd)} +
2^{O(\dd)}\log n)$.

To get a better query time requires some further work.  In particular,
we borrow ideas from the simplified construction of Arya and Malamatos
\cite{am-lsavd-02} (see also \cite{amm-sttan-09}). Naively, this
requires us to use well separated pairs decomposition (i.e., \WSPD)
\cite{ck-dmpsa-95} for $\PntSet$. Unfortunately, no such small \WSPD
exists for data in high dimensions. To overcome this problem, we build
the \WSPD in the embedded space. Next, we use this to guide us in the
construction of the \ANN data-structure.  This results in a
data-structure that can answer $(1+\eps)$-\ANN queries in $O(
2^{O(\dd)} \log n)$ time.  See \secref{a:n:n} for details.

% SARIEL: new stuff.

We also present an algorithm for a weaker model, where the query
subspace is not given to us directly. Instead, every time an \ANN
query is issued, the algorithm computes a region around the query
point such that the returned point is a valid \ANN for all the points
in this region. Furthermore, the algorithm caches such regions, and
whenever a query arrives it first checks if the query point is already
contained in one of the regions computed, and if so it answers the
\ANN query immediately. Significantly, for this algorithm we need no
prespecified knowledge about the query subspace. The resulting
algorithm computes on the fly \AVD on the query subspace. In
particular, we show that if the queries come from a subspace with
doubling dimension $\dd$, then the algorithm would create at most $n/
\eps^{\porder{\dd}}$ regions overall.  A limitation of this new
algorithm is that we do not currently know how to efficiently perform
a point-location query in a set of such regions, without assuming
further knowledge about the subspace.  Interestingly, the new
algorithm can be interpreted as learning the underlying
subspace/manifold the queries come from.  See \secref{oann} for the
precise result.

\myparagraph{Organization}

In \secref{prelims}, we define some basic concepts, and as a warm-up
exercise study the problem where the subspace $\manifold$ is a linear
subspace of $\Re^d$ -- this provides us with some intuition for the
general case. We also present the embedding of $\PntSet$ and
$\manifold$ into the subspace $\mtrB$, which has low doubling
dimension while (roughly) preserving distances of interest.  In
\secref{constant}, we provide a data-structure for constant factor
\ANN using this embedding.  In \secref{epsilon:a:n:n:slow}, we use the
constant \ANN to get a data-structure for answering
$(1+\eps)$-\ANN. In \secref{a:n:n}, we use \WSPD to build a
data-structure that is similar in spirit to \AVD{}s. This results in a
data-structure with slightly faster \ANN query time.  The on the fly
construction of \AVD to answer \ANN queries without assuming any
knowledge of the query subspace is described in \secref{oann}.
Finally, conclusions are provided in \secref{conclusions}.

\section{Preliminaries}
\seclab{prelims}

\subsection{Problem and Model}
\subseclab{problem:model}

\myparagraph{The Problem}

We look at the \ANN problem in the following setting. Given a set
$\PntSet$ of $n$ data points in a metric space $\mtrA$, and a set
$\manifold \subseteq \mtrA$ of (hopefully low) doubling dimension
$\dd$, and $\eps > 0$, we want to preprocess the points of $\PntSet$,
such that given a query point $\query \in \manifold$ one can
efficiently find a $(1 + \eps)$-\ANN of $\query$ in $\PntSet$.

\myparagraph{Model}

We are given a metric space $\mtrA$ and a subset $\manifold \subseteq
\mtrA$ of doubling dimension $\dd$. We assume that the distance
between any pair of points can be computed in constant time in a
black-box fashion. Specifically, for any $\pntA, \pntQ \in \mtrA$ we
denote by $\dist{\pntA}{\pntQ}$ the distance between $\pntA$ and
$\pntQ$. We also assume that one can build nets on
$\manifold$. Specifically, given a point $\pntA \in \manifold$ and a
radius $r > 0$, we assume we can compute $2^{\dd}$ points $\pntA_i \in
\manifold$, such that $\ball{\pntA}{r} \cap \manifold \subseteq
\bigcup \ball{\pntA_i}{r/2}$. By applying this recursively we can
compute an \emphi{$r$-net} $\Net$ for any $\ball{\pntA}{R}$ centered
at $\pntA$; that is, for any point $\pntB \in \ball{\pntA}{R}$ there
exists a point $\pntC \in \Net$ such that $\dist{\pntB}{\pntC} \leq
r$. Let \algNet$(\pntA, R, r)$ denote this algorithm for computing
this $r$-net. The size of $\Net$ is $(R/r)^{O(\dd)}$, and we assume
this also bounds the time it takes to compute it. For example, in
Euclidean space $\Re^d$, let $\pntA$ be the origin and consider the
tiling of space by a grid of cubes of diameter $r$. One can compute an
$r$-net, by simply enumerating all the vertices of the grid cells that
intersect the cube $\pbrc[]{-R,R}^d$ surrounding $\ball{\pntA}{R} =
\ball{0}{R}$.

Finally, given any point $\pntA \in \mtrA$ we assume that one can
compute, in $\order{1}$ time, a point $\closest{\pntA} \in \manifold$
such that $\closest{\pntA}$ is the closest point in $\manifold$ to
$\pntA$. (Alternatively, $\closest{\pntA}$ might be specified for each
point of $\PntSet$ in advance.)

\myparagraph{Spread of a point set} 
For a point set $\PntSet$, the \emphi{spread} is the ratio $\frac{\max
   \limits_{\pntA,\pntB \in \PntSet}\dist[]{\pntA}{\pntB}} {\min
   \limits_{\pntA, \pntB \in \PntSet, \pntA \neq
      \pntB}\dist[]{\pntA}{\pntB}}$. The following result is
elementary.
\begin{lemma}
    Let $\manifold$ be a metric space of doubling dimension $\dd$ and
    $\PntSet \subseteq \manifold$ be a point set with spread
    $\lambda$. Then $\cardin{\PntSet} \leq \lambda^{\order[]{\dd}}$.
\end{lemma}

\myparagraph{Well separated pairs decomposition}

For a point set $\PntSet$, a \emphi{pair decomposition} of $\PntSet$
is a set of pairs $\WSPDRep = \brc{\MakeBig
   \brc{A_1,B_1},\ldots,\brc{A_s,B_s}}$, such that
\begin{inparaenum}[(I)]
    \item $A_i,B_i\subset \PntSet$ for every $i$,
    \item $A_i \cap B_i = \emptyset$ for every $i$, and
    \item $\cup_{i=1}^s A_i \otimes B_i = \PntSet \otimes \PntSet$.
\end{inparaenum}
Here $X \otimes Y = \brc{\MakeBig \brc{x,y} \sep x \in X, y \in Y,
   \text{ and } x \neq y}$.

A pair $\PntSetB \subseteq \PntSet$ and $\PntSetC \subseteq \PntSet$
is \emphic{$(1/\eps)$-separated}{separated!sets} if $\max \pth{
   \diameterX{\PntSetB}, \diameterX{\PntSetC} } \leq \eps \cdot
\dist{\PntSetB}{\PntSetC}$, where $\dist{\PntSetB}{\PntSetC} =
\min_{\pntA \in \PntSetB, \pntB \in \PntSetC} \dist{\pntA}{\pntB}$.
For a point set $\PntSet$, a \emphi{well-separated pair decomposition}
(\emphi{\WSPD{}}) of $\PntSet$ with parameter $1/\eps$ is a pair
decomposition of $\PntSet$ with a set of pairs $\WSPDRep = \brc{
   \brc{A_1,B_1},\ldots,\brc{A_s,B_s}}$, such that, for any $i$, the
sets $A_i$ and $B_i$ are $\eps^{-1}$-separated \cite{ck-dmpsa-95}.

\subsubsection{Net-trees}%
\seclab{net:tree}%

The net-tree \cite{hm-fcnld-06} is a data-structure that defines
hierarchical nets in finite metric spaces. Formally, a net-tree is
defined as follows: Let $\PntSet \subseteq \manifold$ be a finite
subset. A net-tree of $\PntSet$ is a tree $\nettree$ whose set of
leaves is $\PntSet$. Denote by $\PntSet_\vertexA$ the set of leaves in
the subtree rooted at a vertex $\vertexA \in \nettree$. With each
vertex $\vertexA$ is associated a point $\nettreerep{\vertexA} \in
\PntSet_\vertexA$. Internal vertices have at least two children. Each
vertex $\vertexA$ has a level $\lvl{\vertexA} \in \Integers \cup
\brc{-\infty}$. The levels satisfy $\lvl{\vertexA} <
\lvl{\parent{\vertexA}}$, where $\parent{\vertexA}$ is the parent of
$\vertexA$ in $\nettree$. The levels of the leaves are $-\infty$.  Let
$\nettreeconst$ be some large constant, say $\nettreeconst = 11$. The
following properties are satisfied:
\begin{inparaenum}[(I)]
    \item For every vertex $\vertexA \in \nettree$,
    $\ball{\nettreerep{\vertexA}} {\frac{2\nettreeconst}{\nettreeconst
          -1}\nettreeconst^{\lvl[]{\vertexA}}} \supseteq
    \PntSet_\vertexA$,
    \item For every vertex $\vertexA \in \nettree$ that is not the
    root, $\ball{\nettreerep{\vertexA}} {\frac{\nettreeconst -
          5}{2(\nettreeconst - 1)}
       \nettreeconst^{\lvl[]{\parent[]{\vertexA}} - 1}} \cap \PntSet
    \subseteq \PntSet_\vertexA$,
    \item For every internal vertex $\vertexB \in \nettree$, there
    exists a child $\vertexA \in \nettree$ of $\vertexB$ such that
    $\nettreerep{\vertexB} = \nettreerep{\vertexA}$.
\end{inparaenum}

\subsection{Warm-up exercise: Affine Subspace}

% [width=0.6\linewidth,clip=]
\begin{figure*}[t]
    \centerline{
       \includegraphics{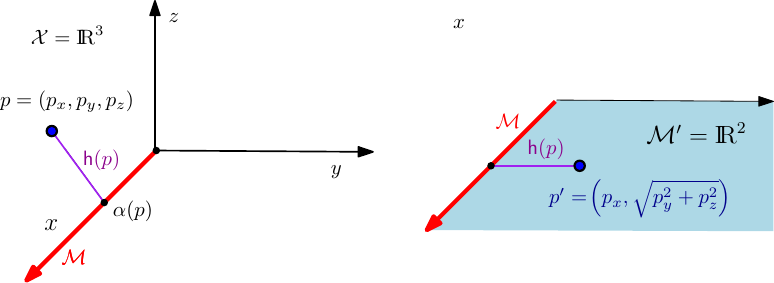}
    }
    \caption{An example of embedding of space into two dimensions
       where $\manifold$ is the $x$-axis.}
    \figlab{fig:euc}
\end{figure*}

We first consider the case where our query subspace is an affine
subspace embedded in $d$ dimensional Euclidean space. Thus let $\mtrA
= \Re^d$ with the usual Euclidean metric. Suppose our query subspace
$\manifold$ is an affine subspace of dimension $k$ where $k \ll d$.
We are also given $n$ data points $\PntSet = \brc{\pntA_1, \pntA_2,
   \ldots, \pntA_n}$. We want to preprocess $\PntSet$ such that given
a $\query \in \manifold$ we can quickly find a point $\pntA_i \in
\PntSet$ which is a $(1+\eps)$-\ANN of $\query$ in $\PntSet$.

We choose an orthonormal system of coordinates for $\manifold$. Denote
the projection of a point $\pntA$ to $\manifold$ as
$\closest{\pntA}$. Denote the coordinates of a point $\closest{\pntA}
\in \manifold$ in the chosen coordinate system as $(\pntA^1,
\pntA^2,\dots,\pntA^k)$. Let $\height{\pntA}$ denote the distance of
any $\pntA \in \Re^d$ from the subspace $\manifold$. Notice that
$\height{\pntA} = \normX{\pntA}{ \closest{\pntA}}$, and consider the
following embedding.

\begin{definition}
    For the point $\pntA \in \Re^d$, the embedded point is
    $\mapped{\pntA} = \pth[]{\MakeSBig
       \pntA^1,\pntA^2,\dots,\pntA^k,\height{\pntA}} \in \Re^{k+1}$.
\end{definition}

An example of the above embedding is shown in \figref{fig:euc}.  It is
easy to see that for $x \in \manifold$ and $y \in \Re^d$, by the
Pythagorean theorem, we have $\normX{x}{y}^2 =
\normX{x}{\closest{y}}^2 + \normX{\closest{y}}{y}^2 =
\normX{x}{\closest{y}}^2 + \height{y}^2 =
\normX{\mapped{x}}{\mapped{y}}^2$.  So,
$\normX{x}{y}=\normX{\mapped{x}}{\mapped{y}}$. That is, the above
embedding preserves the distances between points on $\manifold$ and
any point in $\Re^d$.

As such, given a query point $\query \in \manifold$, let
$\mapped{\pntA_i}$ be its $(1 + \eps)$-\ANN in $\mapped{\PntSet}
\subseteq \Re^{k+1}$. Then the original point $\pntA_i \in \PntSet$
(that generated $\mapped{\pntA_i}$) is a $(1 + \eps)$-\ANN of $\query$
in the original space $\Re^d$.

But this is easy to do using known data-structures for \ANN
\cite{amnsw-oaann-98}, or the data-structures for approximate Voronoi
diagram \cite{h-rvdnl-01, am-lsavd-02}.

Thus, we have $n$ points in $\Re^{k+1}$ to preprocess and, without
loss of generality, we can assume that $\mapped{\pntA_i}$ are all
distinct. Now given $\eps \leq 1/2$, we can preprocess the points
$\brc{\mapped{\pntA_1},\dots,\mapped{\pntA_n}}$ and construct an
approximate Voronoi diagram consisting of $\order{n\eps^{-(k+1)} \log
   \eps^{-1}}$ regions \cite{am-lsavd-02}.  Each such region is the
difference of two cubes. Given a point $\mapped{\query} \in \Re^{k+1}$
we can find a $(1 + \eps)$-\ANN in $\order{\log \pth{ \nfrac{n}{\eps}
   } }$ time, using this data-structure.

\subsection{An Embedding}
\seclab{embedding}
\begin{figure*}[t]
    \centerline{
       \includegraphics{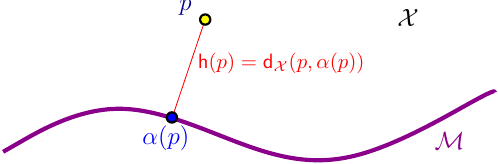}
    }
    \caption{The quantities $\closest{\pntA}$ and $\height{\pntA}$.}
    \figlab{fig:manifold}
\end{figure*}

Here, we show how to embed the points of $\PntSet$ (and all of
$\mtrA$) into another metric space $\mtrB$ with finite doubling
dimension, such that the distances between $\PntSet$ and $\manifold$
are roughly preserved.

For a point $\pntA \in \mtrA$, let $\closest{\pntA}$ denote the
closest point in $\manifold$ to $\pntA$ (for the sake of simplicity of
exposition we assume this point is unique).  The \emphi{height} of a
point $\pntA \in \mtrA$ is the distance between $\pntA$ and
$\closest{\pntA}$; namely, $\height{\pntA} =
\distM{\mtrA}{\pntA}{\closest{\pntA}}$. For a set $\setA \subseteq
\mtrA$, let $\closest{\setA}$ denote the set $\brc{\closest{x} \sep{ x
      \in \setA}}$.  An example is shown in \figref{fig:manifold}.

\begin{definition}[\emphi{$\mapped{\manifold}$ embedding}.]
    Consider the embedding of $\mtrA$ into $\mtrB = \manifold \times
    \Re^+$ induced by the distances of points of $\mtrA$ from
    $\manifold$. Formally, for a point $\pntA \in \mtrA$, the
    embedding is defined as
    \begin{align*}
        \mapped{\pntA} = \pth{\MakeSBig \closest{\pntA}, \,
           \height{\pntA}} \in \mtrB .
    \end{align*}
    % The resulting metric space $\mtrB$ is $\manifold \times
    % \Re^{+}$.
\end{definition}

The distance between any two points $\mapped{\pntA} =
\pth{\closest{\pntA}, \height{\pntA}}$ and $\mapped{\pntB} =
\pth{\closest{\pntB}, \height{\pntB}}$ of $\mtrB$ is defined as
\begin{align*}
    \distM{\mtrB}{\mapped{\pntA}}{\mapped{\pntB}}%
    =%
    \distM{\mtrA}{\MakeSBig \closest{\pntA}}{\closest{\pntB}}
    + \cardin{ \height{\pntA} - \height{\pntB} }.
\end{align*}
It is easy to verify that $\distM{\mtrB}{\cdot}{\cdot}$ complies with
the triangle inequality. For the sake of simplicity of exposition, we
assume that for any two distinct points $\pntA$ and $\pntB$ in our
(finite) input point set $\PntSet$ it holds that $\mapped{\pntA} \ne
\mapped{\pntB}$ (that is,
$\distM{\mtrB}{\mapped{\pntA}}{\mapped{\pntB}} \ne 0$). This can be
easily guaranteed by introducing symbolic perturbations.

\begin{lemma}%
    \lemlab{embed:low:d:d}%
    The following holds:
    \begin{inparaenum}[(A)]
        \item For any two points $x,y \in \manifold$, we have
        $\distM{\mtrB}{\mapped{x}}{\mapped{y}} = \distM{\mtrA}{x}{y}$.

        \item%
        \lempntlab{embed:low:d:d:B}%
        For any point $x \in \manifold$ and $y \in \mtrA$, we have
        $\distM{\mtrA}{x}{y} \leq
        \distM{\mtrB}{\mapped{x}}{\mapped{y}} \leq 3
        \distM{\mtrA}{x}{y}$.

        \item The space $\mtrB$ has doubling dimension at most $2\dd +
        2$, where $\dd$ is the doubling dimension of $\manifold$.
        % \
    \end{inparaenum}
\end{lemma}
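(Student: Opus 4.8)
The plan is to dispatch parts (A) and (B) by direct computation from the definition of $\distM{\mtrB}{\cdot}{\cdot}$ together with the triangle inequality in $\mtrA$, and then to obtain (C) from the standard bound on the doubling constant of an $\ell_1$-product. For (A), if $x \in \manifold$ then $x$ is its own nearest point of $\manifold$, so $\closest{x} = x$ and $\height{x} = 0$, and similarly for $y$; substituting into the definition kills the height term and leaves $\distM{\mtrB}{\mapped{x}}{\mapped{y}} = \distM{\mtrA}{x}{y}$. For (B), fix $x \in \manifold$ and $y \in \mtrA$. Since $\height{x} = 0$ the definition gives $\distM{\mtrB}{\mapped{x}}{\mapped{y}} = \distM{\mtrA}{x}{\closest{y}} + \height{y}$. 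The lower bound is immediate from $\distM{\mtrA}{x}{y} \le \distM{\mtrA}{x}{\closest{y}} + \distM{\mtrA}{\closest{y}}{y} = \distM{\mtrA}{x}{\closest{y}} + \height{y}$. For the upper bound, first note $\height{y} = \dist{y}{\closest{y}} \le \dist{y}{x}$ because $x \in \manifold$ and $\closest{y}$ is the closest point of $\manifold$ to $y$; hence $\height{y} \le \distM{\mtrA}{x}{y}$, and then $\distM{\mtrA}{x}{\closest{y}} \le \distM{\mtrA}{x}{y} + \height{y} \le 2\distM{\mtrA}{x}{y}$. Summing the two estimates yields the factor $3$.

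For part (C), I will use that $\manifold$ has doubling constant $2^{\dd}$, that $\Re^{+}$ has doubling constant $2$, and that $\mtrB$ is exactly the $\ell_1$-product $\manifold \times \Re^{+}$, by the form of $\distM{\mtrB}{\cdot}{\cdot}$. Any $\mtrB$-ball $\ball{(p,h)}{r}$ is contained in $\ball{p}{r} \times \ball{h}{r}$. Covering $\ball{p}{r}$ by $2^{2\dd}$ balls of radius $r/4$ in $\manifold$ (two successive halvings) and $\ball{h}{r}$ by $4$ intervals of radius $r/4$, each product $\ball{p'}{r/4} \times \ball{h'}{r/4}$ lies inside the $\mtrB$-ball $\ball{(p',h')}{r/2}$, since a point in it has $\distM{\mtrA}{\cdot}{p'} + \cardin{\cdot - h'} \le r/4 + r/4$. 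Thus $\ball{(p,h)}{r}$ is covered by $2^{2\dd+2}$ balls of radius $r/2$, so the doubling constant of $\mtrB$ is at most $2^{2\dd+2}$ and its doubling dimension at most $2\dd + 2$.

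The only step that needs a little care is (C): one must notice that the product is $\ell_1$ rather than, say, $\ell_\infty$ (which is what lets a product of half-radius-$r/4$ balls land inside a radius-$r/2$ ball), and one must use two halvings of each factor so that the count is $2^{2\dd}\cdot 2^{2}$ and not $2^{\dd}\cdot 2$. Parts (A) and (B) are routine triangle-inequality manipulations with no real obstacle.
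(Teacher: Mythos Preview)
Your proof is correct and follows essentially the same approach as the paper: parts (A) and (B) are handled by the same triangle-inequality computations, and for (C) both you and the paper project the $\mtrB$-ball into $\manifold$ and $\Re^{+}$, apply two successive halvings in each factor to get $2^{2\dd}\cdot 4$ pieces of radius $r/4$, and then use the $\ell_1$ structure to fit each product piece inside a $\mtrB$-ball of radius $r/2$.
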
%
\begin{proof}
    (A) Clearly, for $x, y \in \manifold$, we have $\mapped{x} =
    (x,0)$ and $\mapped{y} = (y,0)$. As such,
    $\distM{\mtrB}{\mapped{x}}{\mapped{y}} = \distM{\mtrA}{x}{y} +
    |0-0| = \distM{\mtrA}{x}{y}$.
    
    (B) Let $x \in \manifold$ and $y \in \mtrA$. We have $\mapped{x} =
    (x,0)$ and $\mapped{y} = \pth{\MakeSBig
       \closest{y},\distM{\mtrA}{y}{\closest{y}\MakesBig}}$. As such,
    \begin{align*}
        \distM{\mtrB}{\mapped{x}}{\mapped{y}} &=%
        \distM{\mtrA}{\closest{x}}{\closest{y}} + \cardin{0
           -\height{y}}%
        =%
        \distM{\mtrA}{x}{\closest{y}} + \distM{\mtrA}{\closest{y}}{y}%
        \geq%
        \distM{\mtrA}{x}{y},
    \end{align*}
    by the triangle inequality. On the other hand, because
    $\distM{\mtrA}{y}{\closest{y}} = \distM{\mtrA}{y}{\manifold} \leq
    \distM{\mtrA}{x}{y}$, we have
    \begin{align*}
        \distM{\mtrB}{\mapped{x}}{\mapped{y}}%
        &=%
        \distM{\mtrA}{\MakeSBig \closest{x}}{\closest{y}} + \cardin{
           \height{x} - \height{y} }%
        =%
        \distM{\mtrA}{\MakeSBig x }{\closest{y}} + \height{y} %
        \\%
        &=%
        \distM{\mtrA}{x}{\closest{y}} + \distM{\mtrA}{y}{\closest{y}}
        \leq%
        \pth{\MakeSBig \distM{\mtrA}{x}{y} +
           \distM{\mtrA}{y}{\closest{y}}} +
        \distM{\mtrA}{y}{\closest{y}}%
        \\
        &=%
        \distM{\mtrA}{x}{y} + 2\distM{\mtrA}{y}{\closest{y}}%
        \leq%
        3 \distM{\mtrA}{x}{y},
    \end{align*}
    by the triangle inequality.
    
    (C) Consider a point $(\pntA, \num) \in \manifold \times \Re^+ =
    \mtrB $ and the ball $\ballC = \ballExt{\mtrB}{(\pntA,\num)}{r}
    \subseteq \mtrB$ of radius $r$ centered at
    $(\pntA,\num)$. Consider the projection of $\ballC$ into
    $\manifold$; that is $\PntSet_{\manifold} = \brc{ \pntB \sep{
          \pth{\pntB, h} \in \ballC }}$. Similarly, let $\PntSet_{\Re}
    = \brc{ h \sep{ \pth{\pntB, h} \in \ballC }}$.
    
    Clearly, $\ballExt{\mtrB}{\MakesBig (\pntA, \num)}{r} \subseteq
    \PntSet_{\manifold} \times \PntSet_{\Re}$, and
    $\PntSet_{\manifold}$ is contained in
    $\ballExt{\manifold}{\pntA}{r} = \ballExt{\mtrA}{\pntA}{r} \cap
    \manifold$. Since the doubling dimension of $\manifold$ is $\dd$,
    this ball can be covered by $2^{2 \dd}$ balls of the form
    $\ballExt{\manifold}{\pntA_i}{r/4}$ with centers $\pntA_i \in
    \manifold$.
    
    Also since $\PntSet_{\Re} \subseteq \Re$ is contained in the
    interval $\pbrc[]{\num -r, \num + r}$ having length $2r$, it can
    be covered by at most $4$ intervals $I_1, \ldots, I_4$ of length
    $r/2$ each, centered at values ${x_1, \ldots, x_4}$,
    respectively. (Intuitively, each of the intervals $I_j$, is a
    ``ball'' of radius $r/4$.) Then,
    \begin{align*}
        \ballExt{\mtrB}{ \MakeBig (\pntA,\num)}{r} %
        &\quad\subseteq\quad%
        \PntSet_{\manifold} \times \PntSet_{\Re} %
        \quad\subseteq\quad%
        \pth{\bigcup_i \ballExt{\manifold}{\pntA_i}{r/4} } \times
        \pth{ \bigcup_{j=1}^4 I_j }\\
        % \\%
        %
        &\quad\subseteq\quad%
        \bigcup_{j=1}^4 \bigcup_{i}
        \pth{\ballExt{\manifold}{\pntA_i}{r/4} \times I_j} %\\%
        \quad\subseteq\quad%
        \bigcup_{j=1}^4 \bigcup_{i} \ballExt{\mtrB}{\MakeSBig
           \pth{\pntA_i,x_j}}{r/2},
    \end{align*}
    since the set $\ballExt{\manifold}{\MakesBig\pntA_i}{r/4} \times
    I_j$ is contained in
    $\ballExt{\mtrB}{\MakesBig(\pntA_i,x_j)}{r/2}$. We conclude that
    $\ballExt{\mtrB}{(\pntA, \num)}{r}$ can be covered using at most
    $2^{2\dd + 2}$ balls of half the radius.
\end{proof}

% ----------------------------------------------------------------------
% ----------------------------------------------------------------------
% ----------------------------------------------------------------------

\section{A Constant Factor \ANN Algorithm}
\seclab{constant} In this section we present a $6$-\ANN algorithm. We
refine this to a $(1+\eps)$-\ANN in the next section.

\myparagraph{Preprocessing}
In the preprocessing stage, we map the points of $\PntSet$ into the
metric space $\mtrB$ of \lemref{embed:low:d:d}. Build a net-tree for the
point set $\mapped{\PntSet} = \brc{\mapped{\pntA} \sep{\pntA \in
      \PntSet}}$ in $\mtrB$ and preprocess it for \ANN queries using
the net-tree data-structure (augmented for nearest neighbor queries) 
of Har-Peled and Mendel \cite{hm-fcnld-06}. Let $\DS$ denote the 
resulting data-structure.

\myparagraph{Answering a query}
Given $\query \in \manifold$, we compute a $2$-\ANN to
$\mapped{\query} \in \mtrB$ using $\DS$. Let this be the point
$\mapped{\algonnA}$.  Return $\distM{\mtrA}{\query}{\algonnA}$, where
$\algonnA$ is the original point in $\PntSet$ corresponding to
$\mapped{\algonnA}$.

\myparagraph{Correctness}
Let $\nnA$ be the nearest neighbor of $\query$ in $\PntSet$ and let
$\algonnA$ be the point returned. As $\query \in \manifold$ we have by
\lemref{embed:low:d:d} \lempntref{embed:low:d:d:B} that
$\distM{\mtrA}{\query}{\algonnA} \leq
\distM{\mtrB}{\mapped{\query}}{\mapped{\algonnA}}$ and
$\distM{\mtrB}{\mapped{\query}}{\mapped{\nnA}} \leq 3
\distM{\mtrA}{\query}{\nnA}$. As $\mapped{\algonnA}$ is a $2$-\ANN for
$\mapped{\query}$ it follows,
\begin{equation*}
    \distM{\mtrA}{\query}{\algonnA} \leq 
    \distM{\mtrB}{\mapped{\query}}{\mapped{\algonnA}} \leq 
    2 \distM{\mtrB}{\mapped{\query}}{\mapped{\nnA}}
    \leq  6 \distM{\mtrA}{\query}{\nnA}.
\end{equation*}
We thus proved the following.

\begin{lemma}%
    \lemlab{constant:ANN}%
    Given a set $\PntSet \subseteq \mtrA$ of $n$ points and a subspace
    $\manifold$ of doubling dimension $\dd$, one can build a
    data-structure in $2^{\porder{\dd}}n \log n$ expected time, such
    that given a query point $\query \in \manifold$, one can return a
    $6$-\ANN to $\query$ in $\PntSet$ in $2^{\porder{\dd}} \log n$
    query time.  The space used by this data-structure is
    $2^{\porder{\dd}}n$.
\end{lemma}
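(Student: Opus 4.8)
The plan is to reduce the whole task to approximate nearest-neighbor search in a metric of bounded doubling dimension, for which off-the-shelf data structures already exist, and to pay only a constant distortion for the reduction. Everything is done inside the host space $\mtrB$ of \lemref{embed:lowdd}: by part (C) of that lemma $\mtrB$ has doubling dimension at most $2\dd+2$, so $\mapped{\PntSet}$ is an $n$-point set living in a space whose doubling dimension is $\porder{\dd}$. Note also that distances in $\mtrB$ are computable in $\order{1}$: $\distM{\mtrB}{(p,a)}{(p',a')} = \distM{\mtrA}{p}{p'} + \cardin{a-a'}$, and by the model assumptions both the $\manifold$-distance and the heights $\height{\pntA}$ are available in constant time.

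The preprocessing is then: compute $\mapped{\pntA} = \pth{\closest{\pntA},\height{\pntA}}$ for every $\pntA \in \PntSet$ (cost $\order{1}$ each by the model), and build the net-tree of Har-Peled and Mendel \cite{hm-fcnld-06} over $\mapped{\PntSet}$ in $\mtrB$, preprocessed for \ANN queries. For an $n$-point set in a metric of doubling dimension $\porder{\dd}$ that construction takes $2^{\porder{\dd}} n \log n$ expected time and $2^{\porder{\dd}} n$ space, and it supports $c$-\ANN queries, for any fixed constant $c>1$ (in particular $c=2$), in $2^{\porder{\dd}} \log n$ time. These are exactly the bounds claimed. (The symbolic-perturbation assumption from \secref{prelims} makes the points of $\mapped{\PntSet}$ distinct, so the data structure is well defined.)

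To answer a query $\query \in \manifold$, I would map it to $\mapped{\query} = (\query,0)$ — here $\closest{\query}=\query$ and $\height{\query}=0$ since $\query\in\manifold$ — run the net-tree's $2$-\ANN query on $\mapped{\query}$, obtaining some $\mapped{\algonnA}$ with $\algonnA\in\PntSet$, and return $\algonnA$ (and $\dist{\query}{\algonnA}$). For correctness, let $\nnA$ be the true nearest neighbor of $\query$ in $\PntSet$; the argument is a two-step distortion chain. On the "contracting" side, the lower bound in part (B) of \lemref{embed:lowdd} (valid because $\query\in\manifold$) gives $\dist{\query}{\algonnA} \le \distM{\mtrB}{\mapped{\query}}{\mapped{\algonnA}}$. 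On the "expanding" side, the $2$-\ANN guarantee gives $\distM{\mtrB}{\mapped{\query}}{\mapped{\algonnA}} \le 2\min_{\pntA\in\PntSet}\distM{\mtrB}{\mapped{\query}}{\mapped{\pntA}} \le 2\,\distM{\mtrB}{\mapped{\query}}{\mapped{\nnA}}$, and then, using $\height{\nnA}=\dist{\nnA}{\closest{\nnA}}\le\dist{\nnA}{\query}$ together with the triangle inequality, $\distM{\mtrB}{\mapped{\query}}{\mapped{\nnA}} = \dist{\query}{\closest{\nnA}}+\height{\nnA} \le \dist{\query}{\nnA}+2\height{\nnA} \le 3\,\dist{\query}{\nnA}$. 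Composing the three inequalities yields $\dist{\query}{\algonnA}\le 6\,\dist{\query}{\nnA}$, i.e. a $6$-\ANN.

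There is no genuine obstacle here; the lemma is essentially a packaging of the embedding of \lemref{embed:lowdd} with a known doubling-dimension \ANN structure. The one point worth flagging — and the reason the bulk of the paper follows — is that this route cannot already produce a $(1+\eps)$-\ANN: the embedding $\varphi$ preserves $\manifold$-to-$\mtrA$ distances only up to a factor of $3$, so even an exact nearest-neighbor query in $\mtrB$ would certify only a $3$-\ANN in $\mtrA$, and a constant-factor query in $\mtrB$ leaves us at $6$. Driving the distortion down to $1+\eps$ is exactly what \secref{epsilon:a:n:n:slow} and \secref{a:n:n} must accomplish, using the present lemma only as the constant-factor starting point.
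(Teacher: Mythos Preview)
Your proposal is correct and follows essentially the same approach as the paper: map $\PntSet$ into $\mtrB$, invoke the net-tree \ANN structure of \cite{hm-fcnld-06} (whose bounds are exactly those claimed since $\mtrB$ has doubling dimension $\porder{\dd}$), and chain the $2$-\ANN guarantee with the distortion bounds of \lemref{embed:lowdd} to obtain the factor~$6$. The paper's own proof of the lemma records only the resource bounds, with the correctness chain spelled out in the surrounding text exactly as you have it.
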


\begin{proof}
    Since the doubling dimension of $\mtrB$ is at most $2 \dd + 2$,
    building the net-tree and preprocessing it for \ANN queries takes
    $2^{\porder{\dd}} n \log n$ expected time, and the space used is
    $2^{\porder{\dd}}n$ \cite{hm-fcnld-06}. The $2$-\ANN query for a
    point $\query$ takes time $2^{\porder{\dd}} \log n$.
\end{proof}

% ----------------------------------------------------------------------
% ----------------------------------------------------------------------
% ----------------------------------------------------------------------

\section{Answering $(1+\eps)$-\ANN}
\seclab{epsilon:a:n:n:slow}

Once we have a constant factor approximation to the nearest-neighbor
in $\PntSet$ it is not too hard to boost it into $(1+\eps)$-\ANN. To
this end we need to understand what the net-tree \cite{hm-fcnld-06}
provides us with. See Har-Peled and Mendel \cite{hm-fcnld-06} (see
also \secref{net:tree}) for a precise definition of the net-tree.
Roughly speaking, the nodes at a given level $l$, define an
$\nettreeconst^l$-net for $\PntSetB$.  This means that one can compute
an $r$-net for any desired $r$ by looking at nodes whose levels define
the right resolution. Thus $r$-nets derived from the net-tree have a
corresponding set of nodes in the net-tree.  Suppose one needs to
find an $r$-net for the points of $\PntSetB$ inside a ball
$\ballExt{\manifold}{\pntA}{R}$. One computes an \ANN $\algonnA \in
\PntSetB$ of the center $\pntA$. This determines a leaf node $l$ of
the net-tree.  One then seeks out a vertex $\vertexA$ of the net-tree
on the $l$ to root path, such that $l \in \PntSetB_\vertexA$ and the
$\vertexA$ associated ball radius is roughly $R$. By adding
appropriate pointers, one can perform this hopping up the tree in
logarithmic time.  Now, exploring the top of the subtree rooted at
$\vertexA$, and collecting the representative points of the vertices
in that traversal, one can compute an $r$-net for the points in
$\PntSetB \cap \ballExt{\manifold}{\pntA}{R}$. In particular, using
the \ANN data-structure of Har-Peled and Mendel \cite{hm-fcnld-06}
this operation is readily supported.
\begin{lemma}[\cite{hm-fcnld-06}]%
    \lemlab{net:quick}%
    Given a net-tree for a set $\PntSetB \subseteq \manifold$ of $n$
    points in a metric space with doubling dimension $\dd$, and given
    a point $\pntA \in \manifold$ and radius $r \leq R$, one can
    compute an $r$-net $\Net \subseteq \PntSetB$ of $\PntSetB \cap
    \ballExt{\manifold}{\pntA}{R}$, such that the following properties
    hold:
    \begin{compactenum}[(A)]
        \item For any point $\pntB \in \PntSetB \cap
        \ballExt{\manifold}{\pntA}{R}$ there exists a point $\pntC \in
        \Net$ such that $\distM{\manifold}{\pntB}{\pntC} \leq r$.
        
        \item $\cardin{\Net} = (R/r)^{O(\dd)}$.
        
        \item Each point $\pntD \in \Net$ corresponds to a node
        $v(\pntD)$ in the net-tree. Let $\PntSetB_{v(\pntD)}$ denote
        the subset of points of $\PntSetB$ stored in the subtree of
        $v(\pntD)$.  The union $\bigcup_{\pntD \in \Net}
        \PntSetB_{v(\pntD)}$ covers $\PntSetB \cap
        \ballExt{\manifold}{\pntA}{R}$.
        
        \item For any $\pntD \in \Net$, the diameter of the point set
        $\PntSetB_{v(\pntD)}$ is bounded by $r$.
        
        \item The time to compute $\Net$ is $2^{O(\dd)} \log n +
        O\pth{\cardin{\Net} }$.
    \end{compactenum}
\end{lemma}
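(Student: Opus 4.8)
The plan is to read $\Net$ directly off the given net-tree. Recall the structure of the net-tree of Har-Peled and Mendel \cite{hm-fcnld-06}: each node $v$ carries an integer \emph{level} $\ell(v)$ that strictly decreases from parents to children, and a \emph{representative} $\mathrm{rep}(v) \in \PntSetB$; the leaves correspond to the points of $\PntSetB$; there are constants $c_1, c_2 > 0$ (depending only on the scale factor of the tree) such that the subtree point set $\PntSetB_v$ lies in $\ball{\mathrm{rep}(v)}{c_1 2^{\ell(v)}}$ (the covering property) and, for each fixed level $i$, the representatives of the distinct nodes $v$ with $\ell(v) < i \le \ell(\mathrm{parent}(v))$ are pairwise at distance more than $c_2 2^i$ (the packing property); and the tree is compressed (long degree-one paths are contracted). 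I would take $i$ to be the largest integer with $2 c_1 2^i \le r$ (so also $2^i > r/(4c_1)$), let $v(\pntB)$ denote, for $\pntB \in \PntSetB$, the unique ancestor of the leaf $\pntB$ with $\ell < i \le \ell(\mathrm{parent})$, and set
\[
\Net = \brc{ \mathrm{rep}(v) \sep{ \ell(v) < i \le \ell(\mathrm{parent}(v)),\ \PntSetB_v \cap \ball{\pntA}{R} \ne \emptyset }},
\]
with $v(\mathrm{rep}(v)) = v$; this is well-defined since distinct such nodes have distinct representatives by the packing property.

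Properties (A), (C), (D) then follow immediately from the covering property. For $\pntB \in \PntSetB \cap \ball{\pntA}{R}$, the node $v(\pntB)$ has $\PntSetB_{v(\pntB)} \cap \ball{\pntA}{R} \ne \emptyset$, so $\mathrm{rep}(v(\pntB)) \in \Net$ and $\dist{\pntB}{\mathrm{rep}(v(\pntB))} \le c_1 2^{\ell(v(\pntB))} \le c_1 2^i \le r$, giving (A), and $\pntB \in \PntSetB_{v(\pntB)}$, giving (C); while (D) holds because every $v$ appearing in $\Net$ has $\ell(v) \le i-1$, so $\diameterX{\PntSetB_v} \le 2 c_1 2^{i-1} = c_1 2^i \le r$. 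For (B), every representative in $\Net$ lies within $c_1 2^i \le r \le R$ of a point of $\ball{\pntA}{R}$, hence inside $\ball{\pntA}{2R}$, and $\Net$ is $c_2 2^i$-separated; since $\manifold$ has doubling dimension $\dd$, the standard packing bound gives $\cardin{\Net} \le \pth{O(R)/(c_2 2^i)}^{O(\dd)} = (R/r)^{O(\dd)}$, using $2^i = \Theta(r)$.

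For (E), I would compute $\Net$ by the natural top-down procedure, which is the \ANN query of \cite{hm-fcnld-06} with two modifications: the descent is stopped at level $i$ rather than continued to the leaves, and a node $v$ is retained whenever $\dist{\pntA}{\mathrm{rep}(v)} \le R + c_1 2^{\ell(v)}$ — a necessary condition for $\PntSetB_v \cap \ball{\pntA}{R} \ne \emptyset$ — rather than only when it could host the nearest neighbor. The navigation machinery of the net-tree locates the $2^{O(\dd)}$ relevant nodes at the coarsest scale $\Theta(\max(R,r))$ in $2^{O(\dd)}\log n$ time, and from there the descent visits, at each scale $2^j$ with $j \ge i$, only $(R/2^j)^{O(\dd)}$ nodes by the same packing estimate used for (B); summing this geometric series and charging $2^{O(\dd)}$ work per node for inspecting children (a compressed edge that spans level $i$ being cut at its lower endpoint) gives descent cost $(R/r)^{O(\dd)} = O(\cardin{\Net})$, for a total of $2^{O(\dd)}\log n + O(\cardin{\Net})$.

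The delicate point is (E): one cannot invoke the \ANN algorithm of \cite{hm-fcnld-06} as a black box but must re-examine it — in particular, to argue that the coarse-scale entry point near $\pntA$ is reached in $2^{O(\dd)}\log n$ time even though the net-tree may have $\Theta(n)$ levels, and that the widened retention rule still keeps only $(R/2^j)^{O(\dd)}$ candidates per scale so that the descent cost telescopes to $O(\cardin{\Net})$. The remaining parts, (A)--(D), are routine consequences of the net-tree's covering and packing axioms once $i$ is fixed so that $2^i = \Theta(r)$.
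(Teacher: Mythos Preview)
Your proposal is correct and is exactly the ``fiddling with the \ANN algorithm of \cite{hm-fcnld-06}'' that the paper alludes to; the paper itself gives no proof beyond that one-line remark preceding the lemma. You have simply spelled out the details the authors chose to omit, and your identification of (E) as the only nontrivial point is apt.
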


\myparagraph{Construction}
For every point $\pntA \in \PntSet$ we compute an $r(\pntA)$-net
$U(\pntA)$ for $\ballExt{\manifold}{\closest{\pntA}}{\brad{\pntA}}$,
where $r(\pntA) = \eps \height{\pntA} /(20\constA)$ and $\brad{\pntA}
=\constA\height{\pntA}/\eps$. Here $\constA$ is some sufficiently
large constant. This net is computed using the algorithm \algNet, see
\subsecref{problem:model}.  This takes $1/\eps^{O(\dd)}$ time to
compute for each point of $\PntSet$.

For each point $\pntC$ of the net $U(\pntA) \subseteq \manifold$ store
the original point $\pntA$ it arises from, and the distance to the
original point $\pntA$. We will refer to $\reachX{\pntC} =
\distM{\mtrA}{\pntC}{\pntA}$ as the \emphi{reach} of $\pntC$.

Let $\PntSetB \subseteq \manifold$ be union of all these
nets. Clearly, we have that $\cardin{\PntSetB} =
n/\eps^{O(\dd)}$. Build a net-tree $\Tree$ for the points of
$\PntSetB$.  We compute in a bottom-up fashion for each node $v$ of
the net-tree $\Tree$ the point with the smallest reach stored in
$\PntSetB_v$.

\myparagraph{Answering a query}
Given a query point $\query \in \manifold$, compute using the
algorithm of \lemref{constant:ANN} a $6$-\ANN to $\query$ in
$\PntSet$. Let $\Delta$ be the distance from $\query$ to this
\ANN. Let $R = 20 \Delta$, and $r' = \eps \Delta /20$. Using $\Tree$
and \lemref{net:quick}, compute an $r'$-net $N$ of
$\ballExt{\manifold}{\query}{R} \cap \PntSetB$.

Next, for each point $\pntA \in \Net$ consider its corresponding
node $v(\pntA) \in \Tree$. Each such node stores a point of minimum
reach in $\PntSetB_{v(\pntA)}$. We compute the distance to each such
minimum-reach point and return the nearest-neighbor found as the \ANN.

\begin{theorem}
    Given a set $\PntSet \subseteq \mtrA$ of $n$ points and a subspace
    $\manifold$ of doubling dimension $\dd$, and a parameter $\eps>0$,
    one can build a data-structure in $n \eps^{-\porder{\dd}} \log n$
    expected time, such that given a query point $\query \in
    \manifold$, one can return a $(1+\eps)$-\ANN to $\query$ in
    $\PntSet$. The query time is $2^{\porder{\dd}} \log n +
    \eps^{-\porder{\dd}}$.  This data-structure uses $n
    \eps^{-\porder{\dd}}$ space.
    
    \thmlab{1:epsilon:ANN}
\end{theorem}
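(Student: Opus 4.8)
The plan is to establish three things: correctness of the returned point as a $(1+\eps)$-\ANN, the query time bound, and the preprocessing time and space bounds. The space and preprocessing bounds are essentially immediate from the construction: for each of the $n$ points $\pntA \in \PntSet$ the net $U(\pntA)$ has size $(\brad{\pntA}/r(\pntA))^{O(\dd)} = (400\constA^2/\eps^2)^{O(\dd)} = \eps^{-\porder{\dd}}$, so $\cardin{\PntSetB} = n\eps^{-\porder{\dd}}$; building the net-tree $\Tree$ on $\PntSetB$ and annotating each node with the minimum-reach point in its subtree takes $2^{\porder{\dd}}\cardin{\PntSetB}\log \cardin{\PntSetB} = n\eps^{-\porder{\dd}}\log n$ expected time by \cite{hm-fcnld-06}, and the constant-factor data-structure of \lemref{constant:ANN} is cheaper. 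The query time is the sum of the $6$-\ANN query ($2^{\porder{\dd}}\log n$ by \lemref{constant:ANN}), the cost of computing the $r'$-net $\Net$ of $\ballExt{\manifold}{\query}{R}$ via \lemref{net:quick} ($2^{\porder{\dd}}\log n + O(\cardin{\Net})$ with $\cardin{\Net} = (R/r')^{O(\dd)} = (400/\eps)^{O(\dd)} = \eps^{-\porder{\dd}}$), and one distance computation per point of $\Net$; this totals $2^{\porder{\dd}}\log n + \eps^{-\porder{\dd}}$.

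The heart of the argument is correctness. Let $\nnA$ be the true nearest neighbor of $\query$ in $\PntSet$, let $\delta = \dist{\query}{\nnA}$, and let $\Delta$ be the distance from $\query$ to the $6$-\ANN, so $\delta \le \Delta \le 6\delta$. I first need to show that $\nnA$ is "seen" by the net $U(\nnA)$ at the right scale. Since $\nnA \in \PntSet$ was processed, $U(\nnA)$ is a $r(\nnA)$-net of $\ballExt{\manifold}{\closest{\nnA}}{\brad{\nnA}}$ with $r(\nnA) = \eps\height{\nnA}/(20\constA)$ and $\brad{\nnA} = \constA\height{\nnA}/\eps$. The key geometric fact is that $\query$ is close to $\closest{\nnA}$: by the triangle inequality and $\height{\nnA} = \dist{\nnA}{\closest{\nnA}} \le \dist{\nnA}{\query} = \delta$ (because $\query \in \manifold$), we get $\dist{\query}{\closest{\nnA}} \le \delta + \height{\nnA} \le 2\delta$. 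I will also need a lower bound $\height{\nnA}$ can be small, but in that regime $\nnA$ is essentially on $\manifold$ and the argument degenerates favorably; the interesting case is $\height{\nnA} \ge \eps\delta/\constA$ or so, where $\query \in \ballExt{\manifold}{\closest{\nnA}}{\brad{\nnA}}$ holds since $2\delta \le \brad{\nnA}$. Then there is a net point $\pntC \in U(\nnA)$ with $\dist{\query}{\pntC} \le \dist{\query}{\closest{\nnA}} + r(\nnA)$, wait — more carefully, $\pntC$ is within $r(\nnA)$ of $\query$ only if $\query$ itself is covered, which it is; and $\reachX{\pntC} = \dist{\pntC}{\nnA} \le \dist{\pntC}{\closest{\nnA}} + \height{\nnA}$. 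Tracking these perturbations, $\pntC$ lies within distance roughly $\delta + O(\eps\delta) \le (1 + \eps/2)\delta$ of $\query$ and $\reachX{\pntC} \le (1+\eps/2)\delta$ as well, and crucially $\dist{\query}{\pntC} \le r' = \eps\Delta/20$ fails in general, so $\pntC$ need not be in $\Net$ itself — but $\pntC \in \PntSetB$, so $\pntC$ lies in $\PntSetB_{v(\pntA')}$ for some $\pntA' \in \Net$ as long as $\pntC \in \ballExt{\manifold}{\query}{R}$, which holds because $\dist{\query}{\pntC} = O(\delta) \le R = 20\Delta$. By property (C) of \lemref{net:quick} such $\pntA'$ exists, and the minimum-reach point $\pntC'$ stored at $v(\pntA')$ has $\reachX{\pntC'} \le \reachX{\pntC}$. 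Finally, by property (D) the diameter of $\PntSetB_{v(\pntA')}$ is at most $r'$, so $\dist{\pntC}{\pntC'} \le r' = \eps\Delta/20 \le 6\eps\delta/20$; hence the input point $\nnA''$ from which $\pntC'$ arises satisfies $\dist{\query}{\nnA''} \le \dist{\query}{\pntC'} + \reachX{\pntC'} \le (\dist{\query}{\pntC} + r') + \reachX{\pntC} \le (1+\eps)\delta$ after absorbing the $O(\eps\delta)$ error terms into the choice of $\constA$.

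The main obstacle is the bookkeeping of the chain of approximations — $6$-\ANN slack, the net granularity $r(\nnA)$, the query-net granularity $r'$, the subtree-diameter slack, and the reach — and verifying that for a suitably large universal constant $\constA$ all the accumulated $O(\eps\delta)$ terms sum to at most $\eps\delta$, so that $\dist{\query}{\nnA''} \le (1+\eps)\delta$. A secondary technical point is handling the degenerate regime where $\height{\nnA}$ is tiny (so the scales $r(\nnA)$, $\brad{\nnA}$ are tiny or $\closest{\nnA}$ is far from where the net helps); there one argues either that $\nnA$ is so close to $\manifold$ that $\closest{\nnA}$ itself is a near-optimal answer visible in $\Net$ directly, or that $\height{\nnA} \le \eps\delta$ makes the perturbation negligible. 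Modulo these constant-chasing details the theorem follows.
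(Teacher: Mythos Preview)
Your proposal is correct and follows essentially the same route as the paper: bound space/time from the net sizes, then for correctness locate a point of $U(\nnA)$ near $\query$, pass to the subtree of $\Tree$ that contains it via \lemref{net:quick}(C), and use the stored minimum-reach point together with the diameter bound (D) to get $\dist{\query}{\nnA''} \le \dist{\query}{\nnA} + O(r')$. The paper phrases the dichotomy as ``$\dist{\query}{U(\nnA)} \le r'$ versus not'' rather than ``$\height{\nnA}$ large versus tiny'', but since $r(\nnA) = \eps\height{\nnA}/(20\constA) \le r'$ whenever $\height{\nnA} \le \Delta$, these are the same split; also note your aside that ``$\dist{\query}{\pntC} \le r'$ fails in general'' is actually false---it holds automatically in the nondegenerate case---so the argument is even cleaner than you feared.
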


\begin{proof}
    We only need to prove the bound on the quality of the
    approximation.  Consider the nearest-neighbor $\nnA$ to $\query$
    in $\PntSet$.
    \begin{compactenum}[(A)]
        \item If there is a point $\pntD \in U(\nnA) \subseteq
        \PntSetB$ within distance $r'$ from $\query$ then there is
        a net point $\pntC$ of $N$ that contains $\pntD$ in its
        subtree of $\Tree$. Let $\pntE_\algonnA$ be the point of
        minimum reach in $\PntSetB_{v(\pntC)}$, and let $\algonnA \in
        \PntSet$ be the corresponding original point.  Now, we have
        \begin{equation*}
            \distM{\mtrA}{\query}{\algonnA}  
            \leq 
            \distM{\mtrA}{\query}{\pntE_\algonnA}
            + \distM{\mtrA}{\pntE_\algonnA}{\algonnA}
            \leq \distM{\mtrA}{\query}{\pntE_\algonnA} + 
            \distM{\mtrA}{\pntD}{\nnA}
        \end{equation*}
        as the point $\pntE_\algonnA$ has reach
        $\distM{\mtrA}{\pntE_\algonnA}{\algonnA}$, $\pntE_\algonnA$ is
        the point of minimal reach among all the points of
        $\PntSetB_{v(\pntC)}$, $\pntD \in \PntSetB_{v(\pntC)}$, and
        $\distM{\mtrA}{\pntD}{\nnA}$ is the reach of $\pntD$ and thus
        an upper bound on
        $\distM{\mtrA}{\pntE_\algonnA}{\algonnA}$. By the triangle
        inequality, we have
        \begin{align*}
            \distM{\mtrA}{\query}{\algonnA}%
            &\leq%
            \distM{\mtrA}{\query}{\pntE_\algonnA}+\distM{\mtrA}{\query}{\nnA}
            + \distM{\mtrA}{\pntD}{\query}%
            \\
            &\leq%
            \pth{\MakeSBig\distM{\mtrA}{\query}{\pntD} +
               \distM{\mtrA}{\pntD}{\pntE_\algonnA}}
            +\distM{\mtrA}{\query}{\nnA} +
            \distM{\mtrA}{\pntD}{\query}%
            \\
            &\leq%
            \distM{\mtrA}{\query}{\nnA} + 3r',
        \end{align*}
        as $\pntD, \pntE_\algonnA \in \PntSetB_{v(\pntC)}$, the
        diameter of $\PntSetB_{v(\pntC)}$ is at most $r'$, and by
        assumption $\distM{\mtrA}{\pntD}{\query} \leq r'$. So we have,
        \begin{equation*}
            \distM{\mtrA}{\query}{\algonnA} %
            \leq 
            \distM{\mtrA}{\query}{\nnA} + 3 \eps \Delta/ 20
            \leq (1 + \eps) \distM{\mtrA}{\query}{\nnA}.
        \end{equation*}
        
        \item Otherwise, it must be that,
        $\distM{\mtrA}{\query}{U(\nnA)} > r'$. Observe that it must
        be that $r(\nnA) < r'$ as $\height{\nnA} \leq \Delta$. It must
        be therefore that the query point is outside the region
        covered by the net $U(\nnA)$. As such, we have
        \begin{align*}
            R(\nnA) 
            &=
            \frac{\constA \height{\nnA}}{\eps} < 
            \distM{\mtrA}{\closest{\nnA}}{\query}
            \leq
            \distM{\mtrA}{\query}{\nnA} + \distM{\mtrA}{\nnA}{\closest{\nnA}}
            \leq
            2 \distM{\mtrA}{\nnA}{\query} \leq 2\Delta,
        \end{align*}
        which means $\height{\nnA} \leq 2 \eps \Delta / \constA$.
        Namely, the height of the point $\nnA$ is insignificant in
        comparison to its distance from $\query$ (and conceptually can
        be considered to be zero). In particular, consider the net
        point $\pntC \in \Net$ that contains in its subtree
        the point $\pntD \in U(\nnA)$  
        closest to $\closest{\nnA}$ i.e. 
        $\distM{\manifold}{\closest{\nnA}}{\pntD} \leq r(\nnA)$. 
        The point of smallest reach in this subtree provides
        a $(1+\eps)$-\ANN as an easy but tedious argument similar to
        the one above shows.
    \end{compactenum}
    \aftermathA
\end{proof}
\section{Answering $(1+\eps)$-\ANN faster}
\seclab{a:n:n}

In this section, we extend the approach used in the above construction
to get a data-structure which is similar in spirit to an \AVD of
$\PntSet$ on $\manifold$. Specifically, we spread a set of points
$\centerset$ on $\manifold$, and we associate a point of $\PntSet$
with each one of them.  Now, answering $2$-\ANN on $\centerset$, and
returning the point of $\PntSet$ associated with this point, results
in the desired $(1+\eps)$-\ANN.

\subsection{The construction}

For a set $\mapped{Z} \subseteq \mapped{\PntSet}$ let
\begin{equation*}
    \hMax{\mapped{Z}} = \max_{(\pntA,h) \in \mapped{Z}} h.
\end{equation*}

The preprocessing stage is presented in \figref{preprocessing}, and
the algorithm for finding the $(1+\eps)$-\ANN for a given query is
presented in \figref{a:n:n}.

\begin{figure*}
    \centerline{
       \begin{algorithm}
           \algBuildANN{}$(\PntSet, \manifold)$. \+\+ \\
           $\mapped{\PntSet} = \brc{\mapped{x} \sep{ x \in
                 \PntSet}} \subseteq \mtrB$\\
           Compute a $8$-\WSPD $\WSPDRep =\brc{
              \brc{\mapped{A_1},\mapped{B_1}},\ldots,\brc{\mapped{A_s},
                 \mapped{B_s}}}$ of $\mapped{\PntSet}$\\
           \For
           $\brc{\mapped{A_i},\mapped{B_i}} \in \WSPDRep$ \Do \+ \\
           Choose points $\mapped{a_i} \in \mapped{A_i}$ and
           $\mapped{b_i} \in \mapped{B_i}$.\\
           $\pDist_i =
           \distM{\mtrB}{\mapped{a_i}}{\mapped{b_i}}$,\;\;\; $\PDist_i
           = \pDist_i + \hMax{\mapped{A_i}} +
           \hMax{\mapped{B_i}}$\\
           $R_i = \constB \PDist_i / \eps$, \;\; $r_i = \eps \PDist_i/\constB$ \\
           $\Net_i = \algNet( \closest{a_i}, R_i, r_i ) \cup
           \algNet( \closest{b_i}, R_i, r_i )$.\-\\
           \\
           $\centerset = \Net_1 \cup \ldots \cup \Net_s$\\
           $\centernettree \leftarrow$ Net-tree for $\centerset$
           \cite{hm-fcnld-06}\\
           \For $\pntA \in \centerset$ \Do \+\\
           Compute $\NNbrInSet{\pntA}{\PntSet}$ and store it with
           $\pntA$
       \end{algorithm}}
    \caption{Preprocessing the subspace $\manifold$ to answer
       $(1+\eps)$-\ANN queries on $\PntSet$. Here $\constB$ is a
       sufficiently large constant.}
    \figlab{preprocessing}
\end{figure*}

\begin{figure}
    \centerline{
       \begin{algorithm}
           \algANN( $\query \in \manifold$ )\+\\
           $\pntA \leftarrow$ $2$-\ANN of $\query$ in $\centerset$\\
           (Use net-tree $\centernettree$ \cite{hm-fcnld-06} to
           compute
           $\pntA$.)\\
           $\algonnA \leftarrow $ the point in $\PntSet$
           associated with $\pntA$.\\
           \Return $\algonnA$
       \end{algorithm}%
    }
    \caption{Computing a $(1 + \order{\eps})$-\ANN in $\PntSet$ for a
       query point $\query \in \manifold$.}
    \figlab{a:n:n}
\end{figure}

\subsection{Analysis}

Suppose the data-structure returned $\algonnA$ and the actual nearest
neighbor of $\query$ is $\nnA$. If $\algonnA = \nnA$ then the
algorithm returned the exact nearest-neighbor to $\query$ and we are
done. Otherwise, by our general position assumption, we can assume
that $\mapped{\algonnA} \neq \mapped{\nnA}$.  Note that there is a
\WSPD pair $\brc{\mapped{\sA},\mapped{\sB}}\in \WSPDRep$ that
separates $\mapped{\algonnA}$ from $\mapped{\nnA}$ in $\mtrB$; namely,
$\mapped{\algonnA} \in \mapped{\sA}$ and $\mapped{\nnA} \in
\mapped{\sB}$.  Let
\begin{align*}
    \pDist = \distM{\mtrB}{\mapped{\pa}}{\mapped{\pb}},
\end{align*}
where $\mapped{\pa}$ and $\mapped{\pb}$ are the representative points
of $\mapped{\sA}$ and $\mapped{\sB}$, respectively.  Let $\pa$ and
$\pb$ be the points of $\PntSet$ corresponding to $\mapped{\pa}$ and
$\mapped{\pb}$, respectively. Now, let
\begin{align*}
    \PDist = \hMax{\mapped{\sA}} + \hMax{\mapped{\sB}} + \pDist,%
    \qquad%
    R = \constB \PDist /\eps%
    \qquad%
    \text{and}%
    \qquad%
    r = \eps \PDist/ \constB.
\end{align*}
% No, let

\begin{observation}%
    \obslab{diam}%
    By the definition of a $8$-\WSPD and the triangle inequality, for
    any $\mapped{x} \in \mapped{\sA}$ and $\mapped{y} \in
    \mapped{\sB}$, we have that $\distM{\mtrB}{\mapped{x}}{\mapped{y}}
    \leq \diameterX{\mapped{\sA}} + \diameterX{\mapped{\sB}} +
    \distM{\mtrB}{\mapped{\pa}}{\mapped{\pb}} \leq \pth[]{5/4}
    \pDist$.
\end{observation}

We study the two possible cases, $\query \notin
\ballExt{\manifold}{\closest{{\pa}}}{ R }$ $\cup$
$\ballExt{\manifold}{\closest{{\pb}}}{ R }$ (\lemref{c:notin}) and
$\query \in \ballExt{\manifold}{\closest{{\pa}}}{R}$ $\cup$
$\ballExt{\manifold}{\closest{{\pb}}}{R}$ (\lemref{c:in}).
\begin{lemma}%
    \lemlab{c:notin}%
    If $\query \notin \ballExt{\manifold}{\closest{{\pa}}}{ R } \cup
    \ballExt{\manifold}{\closest{{\pb}}}{ R }$ then the algorithm from
    \figref{a:n:n} returns a $(1 + \eps)$-\ANN in $\PntSet$ to the
    query point $\query$ (assuming $\constB$ is sufficient
    large). Restated informally -- if $\query$ is far from both
    $\algonnA$ and $\nnA$ (compared to the distance between them) then
    the \ANN computed is correct.
\end{lemma}
\begin{proof}
    We have $\distM{\mtrA}{\closest{\nnA}}{\closest{\algonnA}} \leq
    \distM{\mtrB}{\mapped{\nnA}}{\mapped{\algonnA}}%
    \leq 5/4 \pDist$ by \obsref{diam}. So, by the triangle inequality,
    we have $\distM{\mtrA}{\nnA}{\algonnA}%
    \leq %
    \height{{\nnA}} + %
    \distM{\mtrA}{\closest{\nnA}}{\closest{\algonnA}} +
    \height{{\algonnA}} \leq \hMax{\mapped{\sA}} +(5/4)\pDist +
    \hMax{\mapped{\sB}}
    \leq (\nfrac{5}{4}) \PDist$.

    Since $\mapped{\nnA}, \mapped{\pb} \in \mapped{\sB}$, we have
    $\distM{\mtrA}{\closest{\nnA}}{\closest{\pb}}%
    \leq%
    \distM{\mtrB}{\mapped{\nnA}}{\mapped{\pb}}%
    \leq%
    \diameterX{\mapped{\sB}} \leq \pDist / 8 \leq \PDist / 8$.
    Therefore,
    \begin{align*}
        \distM{\mtrA}{\query}{\closest{\nnA}}%
        &\geq%
        \distM{\mtrA}{\query}{\closest{{\pb}}} -
        \distM{\mtrA}{\closest{\nnA}}{\closest{{\pb}}}%
        \geq%
        R - \diameterX{\mapped{\sB}}%
        = \constB \frac{\PDist}{\eps} - \diameterX{\mapped{\sB}}%
           \\%
           &%
        \geq%
        \PDist \pth{\frac{\constB}{\eps} - \frac{1}{8}}%
        \geq%
        \frac{\constB \PDist}{2\eps},
    \end{align*}
    assuming $\eps \leq 1$ and $\constB \geq 1$.  Now,
    $\distM{\mtrA}{\query}{\nnA} \geq \distM{\mtrA}{\nnA}{\manifold} =
    \distM{\mtrA}{\nnA}{\closest{\nnA}} $, and thus by the triangle
    inequality, we have
    \begin{align*}
        \distM{\mtrA}{\query}{\nnA}%
        \geq%
        \frac{\distM{\mtrA}{\query}{\nnA} +
           \distM{\mtrA}{\nnA}{\closest{\nnA}} }{2}%
        \geq%
        \frac{\distM{\mtrA}{\query}{\closest{\nnA}}}{2}%
        \geq%
        \frac{\constB \PDist}{4 \eps}.
    \end{align*}
    This implies that $\distM{\mtrA}{\query}{y}%
    \leq%
    \distM{\mtrA}{\query}{\nnA} + \distM{\mtrA}{\nnA}{y} \leq
    \distM{\mtrA}{\query}{\nnA} + (5/4) \PDist \leq (1+\eps)
    \distM{\mtrA}{\query}{\nnA}$, assuming $\constB \geq 5$.
\end{proof}

\begin{lemma}%
    \lemlab{c:in}%
    If $\query \in \ballExt{\manifold}{\closest{{\pa}}}{R} \cup
    \ballExt{\manifold}{\closest{{\pb}}}{R}$ then the algorithm
    returns a $(1+\eps)$-\ANN in $\PntSet$ to the query point
    $\query$. 
\end{lemma}

\begin{proof}
    Since the algorithm covered the set
    $\ballExt{\manifold}{\closest{{\pa}}}{R} \cup
    \ballExt{\manifold}{\closest{{\pb}}}{R}$ with a net of radius $r =
    \eps \PDist/\constB$, it follows that
    $\distM{\mtrA}{\query}{\centerset} \leq r$. Let $\cPnt$ be the
    point in the $2$-\ANN search to $\query$ in $\centernettree$. We
    have $\distM{\mtrA}{\query}{\cPnt} \leq 2r$.  Now, the algorithm
    returned the nearest neighbor to $\cPnt$ as the \ANN; that is,
    $\algonnA$ is the nearest neighbor of $\cPnt$ in $\PntSet$.

    Now,
    \begin{align*}
        \distM{\mtrA}{\query}{\algonnA}%
        &\leq%        
        \distM{\mtrA}{\cPnt}{\algonnA} +\distM{\mtrA}{\query}{\cPnt}%
        \leq%
        \distM{\mtrA}{\cPnt}{\algonnA} + 2 r %
        \leq%
        \distM{\mtrA}{\cPnt}{\nnA} + 2 r %
        \\%
        &\leq%
        \distM{\mtrA}{\query}{\nnA} + \distM{\mtrA}{\cPnt}{\query} + 2 r%
        \leq%
        \distM{\mtrA}{\query}{\nnA} + 4 r % 
        =%
        \distM{\mtrA}{\query}{\nnA} + 4 \frac{\eps \PDist}{\constB},%
    \end{align*}
    by the triangle inequality.
    Therefore, if $\distM{\mtrA}{\query}{\algonnA} \geq \PDist / 40$ then,
    \begin{align*}
        \distM{\mtrA}{\query}{\nnA}%
        &\geq%
        \distM{\mtrA}{\query}{\algonnA} - 4 \frac{\eps \PDist}{\constB}%
        \geq%
        (1-\eps/2) \distM{\mtrA}{\query}{\algonnA},
    \end{align*}
    assuming $\constB \geq 320$. Since
    $1/(1-\eps/2) \leq 1+\eps$, we have that
    $\distM{\mtrA}{\query}{\algonnA} \leq (1+\eps)
    \distM{\mtrA}{\query}{\nnA}$.
    
    Similarly, if $\distM{\mtrA}{\query}{\nnA} \geq \PDist / 40$ then,
    \begin{align*}
        \distM{\mtrA}{\query}{\algonnA}%
        &\leq%        
        \distM{\mtrA}{\query}{\nnA} + 4 \frac{\eps \PDist}{\constB}%
        \leq % 
        (1+\eps) \distM{\mtrA}{\query}{\nnA},
    \end{align*}
    assuming $\constB \geq 160$.
    
    We prove by contradiction that the case
    $\distM{\mtrA}{\query}{\nnA} \leq \nfrac{\PDist}{40}$ and
    $\distM{\mtrA}{\query}{\algonnA} \leq \nfrac{\PDist}{40}$ is
    impossible. That is, intuitively, $\PDist$ is roughly the distance
    between $\nnA$ to $\algonnA$, and there is no point that can be
    close to both $\nnA$ and $\algonnA$.  Indeed, under those
    assumptions, $\height{\nnA} \leq \distM{\mtrA}{\query}{\nnA} \leq
    \nfrac{\PDist}{40}$ and $\height{\algonnA} \leq
    \distM{\mtrA}{\query}{\algonnA} \leq \nfrac{\PDist}{40}$.  Observe
    that
    \begin{align*}
        \hMax{\mapped{\sA}}%
        \leq%
        \height{\algonnA} + \diameterX{\mapped{\sA}}%
        \leq%
        \nfrac{\PDist}{40} + \frac{\pDist}{8}%
        \leq%
        \frac{3\PDist}{20}.
    \end{align*}
    and similarly $\hMax{\mapped{\sB}} \leq \nfrac{3\PDist}{20}$.
    This implies that
    \begin{align*}
        (3/4)\pDist%
        &= %
        \pDist\pth{1 - \frac{1}{8} - \frac{1}{8}} \\%
        &\leq%
        \distM{\mtrB}{\mapped{\pa}}{\mapped{\pb}}
        -\diameterX{\mapped{\sA}} -\diameterX{\mapped{\sB}}%
        \leq%
        \distM{\mtrB}{\mapped{\nnA}}{\mapped{\algonnA}}%
        \\%
        &=%
        \cardin{\height{\nnA} - \height{\algonnA}} +
        \distM{\mtrA}{\closest{\nnA}}{\closest{\algonnA}}%
        \leq%
        \PDist/40 + \distM{\mtrA}{\closest{\nnA}}{\nnA} +
        \distM{\mtrA}{\nnA}{\algonnA} +
        \distM{\mtrA}{\algonnA}{\closest{\algonnA}}%
        \\%
        &\leq%
        \PDist/40 + \height{\nnA}%
        + (\distM{\mtrA}{\nnA}{\query} +
        \distM{\mtrA}{\query}{\algonnA}) + \height{\algonnA}%
        \\%
        &\leq%
        \PDist/40 + 3\PDist/20 + \PDist/40 + \PDist/40 + 3 \PDist/20%
        \leq 3 \PDist/8
    \end{align*}
    This implies that $\pDist \leq \PDist / 2$ and thus $\PDist =
    \pDist + \hMax{\mapped{\sA}} + \hMax{\mapped{\sB}} \leq \PDist /2
    + 3\PDist /20 + 3\PDist / 20 = (4/5)\PDist$.  This implies that
    $\PDist \leq 0$. We conclude that
    $\distM{\mtrB}{\mapped{\pa}}{\mapped{\pb}} = \pDist \leq \PDist
    \leq 0$. That implies that $\mapped{\pa} = \mapped{\pb}$, which is
    impossible, as no two points of $\PntSet$ get mapped to the same
    point in $\mtrB$. (And of course, no point can appear in both
    sides of a pair in the \WSPD.)
\end{proof}

The preprocessing time of the above algorithm is dominated by the task
of computing for each point of $\centerset$ its nearest neighbor in
$\PntSet$.  Observe that the algorithm would work even if we only use
$(1+O(\eps))$-\ANN.  Using \thmref{1:epsilon:ANN} to answer these
queries, we get the following result.

\begin{theorem}
    Given a set of $\PntSet \subseteq \mtrA$ of $n$ points, and a
    subspace $\manifold$ of doubling dimension $\dd$, one can
    construct a data-structure requiring space $n
    \eps^{-\porder{\dd}}$, such that given a query point $\query \in
    \manifold$ one can find a $(1 + \eps)$-\ANN to $\query$ in
    $\PntSet$. The query time is $2^{\porder{\dd}} \log (n/\eps)$, and the
    preprocessing time to build this data-structure is $n
    \eps^{-\porder{\dd}} \log n$.
    
    \thmlab{main}
\end{theorem}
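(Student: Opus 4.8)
The plan is to observe that correctness of the algorithm in \figref{a:n:n} has essentially already been established by the two preceding lemmas, so what remains is the resource accounting. Fix a query $\query \in \manifold$, let $\algonnA$ be the point returned and $\nnA$ the true nearest neighbor of $\query$ in $\PntSet$. If $\algonnA = \nnA$ we are done; otherwise $\mapped{\algonnA} \neq \mapped{\nnA}$ by the general position assumption, and since the pairs of an $8$-\WSPD{} cover $\PntSet \otimes \PntSet$, there is a pair $\brc{\mapped{A},\mapped{B}} \in \WSPDRep$ with $\mapped{\algonnA} \in \mapped{A}$ and $\mapped{\nnA} \in \mapped{B}$. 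With $R = \constB \PDist/\eps$ and $r = \eps\PDist/\constB$ defined from this pair as in \secref{a:n:n}, the query $\query$ either lies outside $\ball{\closest{a}}{R} \cup \ball{\closest{b}}{R}$ or inside it; the two preceding lemmas cover exactly these two cases and in both conclude that $\algonnA$ is a $(1+\eps)$-\ANN of $\query$ (here one uses that the net $\Net_i$ built in \figref{preprocessing} for this pair is $\algNet(\closest{a}, R, r) \cup \algNet(\closest{b}, R, r)$ and hence covers $\ball{\closest{a}}{R} \cup \ball{\closest{b}}{R}$ at scale $r$). Finally, replacing $\eps$ by $\eps/c$ for a suitable absolute constant $c$ turns the ``$(1+O(\eps))$'' of the algorithm into $(1+\eps)$ and only changes the constant hidden inside $\porder{\dd}$.

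For the space bound, I would invoke \lemref{embed:lowdd}(C): since $\mtrB$ has doubling dimension $2\dd+2$, the standard net-tree--based construction yields an $8$-\WSPD{} of $\mapped{\PntSet}$ with $s = 2^{\porder{\dd}} n$ pairs. For each pair the ratio $R_i/r_i = \constB^2/\eps^2$ is fixed, so $\cardin{\Net_i} = (R_i/r_i)^{O(\dd)} = \eps^{-\porder{\dd}}$, whence $\cardin{\centerset} \le \sum_{i=1}^{s}\cardin{\Net_i} = n\,\eps^{-\porder{\dd}}$. The net-tree $\centernettree$ over $\centerset$ together with the single stored nearest-neighbor pointer per point of $\centerset$ occupy space linear in $\cardin{\centerset}$, giving the claimed bound. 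For the query time, the algorithm performs one $2$-\ANN search against $\centerset$ via $\centernettree$; since $\centerset \subseteq \manifold$ has doubling dimension $\dd$, this costs $2^{\porder{\dd}} \log \cardin{\centerset} = 2^{\porder{\dd}} \log n$ (absorbing the $\log \eps^{-\porder{\dd}}$ term), plus $O(1)$ to output the associated point of $\PntSet$.

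The delicate part, and where I expect the real work to lie, is the preprocessing time, because the construction bootstraps on the slower data-structure of \thmref{1:epsilon:ANN}. Computing $\mapped{\PntSet}$ is $O(n)$; building the $8$-\WSPD{} in the doubling metric $\mtrB$ is $2^{\porder{\dd}} n \log n$ expected time; computing all nets $\Net_i$ via \algNet{} is $s \cdot \eps^{-\porder{\dd}} = n\,\eps^{-\porder{\dd}}$; and building $\centernettree$ is $2^{\porder{\dd}}\cardin{\centerset}\log\cardin{\centerset} = n\,\eps^{-\porder{\dd}}\log n$ expected time. The dominant step is computing $\NNbrInSet{\pntA}{\PntSet}$ for each of the $\cardin{\centerset} = n\,\eps^{-\porder{\dd}}$ points $\pntA \in \centerset$. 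Here I would first confirm the remark that the whole algorithm still succeeds when these pointers are only $(1+O(\eps))$-\ANN's rather than exact nearest neighbors: this means re-running the two correctness lemmas with $\algonnA$ taken to be an \emph{approximate} nearest neighbor of $\cPnt$, and checking that the triangle-inequality chains survive with slightly enlarged constants, which they do. Given that, one builds the data-structure of \thmref{1:epsilon:ANN} on $\PntSet$ with query subspace $\manifold$ in $n\,\eps^{-\porder{\dd}}\log n$ expected time, and answers the $\cardin{\centerset}$ queries, each in $2^{\porder{\dd}}\log n + \eps^{-\porder{\dd}}$ time; the total is $n\,\eps^{-\porder{\dd}}\log n$, which dominates all other terms and yields the stated preprocessing bound.
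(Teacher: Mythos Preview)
Your proposal is correct and follows essentially the same approach as the paper: correctness is inherited from the two preceding lemmas, and the resource bounds are obtained by counting \WSPD{} pairs, net sizes, and then observing that the preprocessing is dominated by the $(1+O(\eps))$-\ANN{} computations for the points of $\centerset$, which are handled via \thmref{1:epsilon:ANN}. The paper's own proof is a single terse paragraph making exactly these points; you have simply spelled out the accounting in more detail.
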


% ----------------------------------------------------------------------
% ----------------------------------------------------------------------

\section{Online \ANN}
\seclab{oann}

The algorithms of \secref{epsilon:a:n:n:slow} and \secref{a:n:n}
require that the subspace of the query points is known, in that we can
compute the closest point $\closest{\pntA}$ on $\manifold$ given a
$\pntA \in \mtrA$, and that we can find a net for a ball on
$\manifold$ using $\algNet$, see \subsecref{problem:model}. In this
section we show that if we are able to efficiently answer membership
queries in regions that are the difference of two balls, then we do
not need such explicit access to $\manifold$. We construct an \AVD on
$\manifold$ in an online manner as the query points arrive. When a new
query point arrives, we test for membership among the existing regions
of the \AVD. If a region contains the point we immediately output its
associated \ANN that is already stored with the region. Otherwise we
use an appropriate algorithm to find a nearest neighbor for the query
point and add a new region to the \AVD.

Here we present our algorithm to compute the \AVD in this online
setting and prove that when the query points come from a subspace of
low doubling dimension, the number of regions created is linear.

\subsection{Online \AVD Construction and \ANN Queries}
The algorithm \algBuildAVD{}$(\PntSet,\RegionSet,\query)$ is presented
in \figref{online-ann}. The algorithm maintains a set of regions
$\RegionSet$ that represent the partially constructed \AVD. Given a
query point $\query$ it returns an \ANN from $\PntSet$ and if needed
adds a region $\RegionA_{\query}$ to $\RegionSet$. The quantity
$\diam'$ is a $2$-approximation to the diameter $\diam$ of $\PntSet$,
and can be precomputed in $\order{n}$ time.  Let $\pntA$ be an
arbitrary fixed point of $\PntSet$.
\begin{figure*}
    \centerline{
       \begin{algorithm}
           \algBuildAVD{}$(\PntSet,\RegionSet,\query)$. \+ \+ \\
           \Comment{$\pntA$ is an arbitrary fixed point in $\PntSet$.}\\
           \Comment{$\diam'$ is a $2$-approximation to $\diameter{\PntSet}$.}\\
           \If $\distM{\mtrA}{\query}{\pntA}
           \geq 4\diam'/\eps$ \Then \Return $\pntA$.\\
           \If $\exists \RegionA \in \RegionSet$ with $\query \in
           \RegionA$ \Then
           \+ \\
           \Return the point associated with $\RegionA$.\- \\
           Compute $(1 + \eps/10)$-\ANN $\annA$ of $\query$ in
           $\PntSet$.\\
           $\distA_1 \leftarrow \distM{\mtrA}{\query}{\annA}$. \\
           \If there is no point in $\PntSet \setminus
           \ballExt{\mtrA}{\annA}{\eps \distA_1/3}$
           \Then \+ \\
           $\RegionA_{\query} \leftarrow \ballExt{\mtrA}{\query}{\diam'/4}$. \-\\
           \Else \+ \\
           $\annAf \leftarrow $ furthest point from $\annA$ in
           $\PntSet \cap \ballExt{\mtrA}{\annA}{\eps
              \distA_1/3}$.\\
           $\distB \leftarrow \distM{\mtrA}{\annA}{\annAf}$.
           \;\;\Comment{$\distB \leq \eps \distA_1/3$.}
           \\
           \Comment{One can use any \ANN algorithm, or even brute-force
           to compute $\algoonnB$.} \\ 
           $\algoonnB \leftarrow (1 + \eps/10)$-\ANN of $\query$ in
           $\PntSet \setminus \ballExt{\mtrA}{\annA}{\eps \distA_1/3}$. \\
           $\distA_2 \leftarrow \distM{\mtrA}{\query}{\algoonnB}$. \\
           $\RegionA_{\query} \leftarrow \ballExt{\mtrA}{\query}{\eps
              \distA_2 / 5} \setminus
           \ballExt{\mtrA}{\annA}{5\distB/4\eps}$. \- \\
           Associate $\annA$ with $\RegionA_{\query}$. \\
           $\RegionSet \leftarrow \RegionSet \cup \RegionA_{\query}$.\\
           \Return $\annA$ as the \ANN for $\query$.
       \end{algorithm}%
    }
    \caption{Answering $(1 + \eps)$-\ANN and constructing \AVD.}
    \figlab{online-ann}
    \vspace{-0.6cm}
\end{figure*}

\begin{figure*}[t]
    \bigskip%
    \centerline{
    \begin{tabular}{ccc}
       \includegraphics{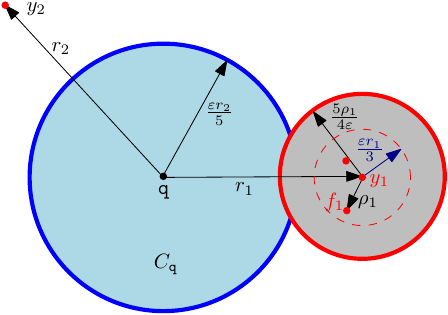}
       &
       \qquad \qquad
       &
       \includegraphics{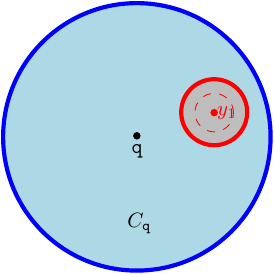}%\\
%       (A)& &(B)
   \end{tabular}%
}
\caption{Examples of a computed \AVD region $\RegionA_{\query}$.}
    \figlab{region}
\end{figure*}

The regions created by the algorithm in \figref{online-ann} are the
difference of two balls. An example region when the balls
$\ballExt{\mtrA}{\query}{\eps \distA_2/5}$ and
$\ballExt{\mtrA}{\annA}{5 \distB/4 \eps}$ intersect is shown in
\figref{region}.  The intuition as to why $\annA$ is a valid \ANN
inside this region is as follows. Since the distance of $\query$ to
$\annA$ is $\distA_1$, the points inside $\ballExt{\mtrA}{\annA}{\eps
   \distA_1/3}$ are all roughly the same distance from $\query$ when
$\query$ is far enough from $\annA$. The
next distance of interest, $\distM{\mtrA}{\query}{\algoonnB} = \distA_2$, 
is the distance to a \ANN of points outside this ball. As
long as we are inside $\ballExt{\mtrA}{\query}{\eps \distA_2/5}$ and far enough
from $\annA$ i.e. $\distM{\mtrA}{\query}{\annA} > 5 \distB / 4 \eps$, the
points outside $\ballExt{\mtrA}{\annA}{\eps \distA_1/3}$ are too far
and cannot be a $(1 + \eps)$-\ANN. But if we get too close to $\annA$
we can no longer be certain that $\annA$ is a valid $(1 + \eps)$-\ANN,
as it is no more true that distances to points inside
$\ballExt{\mtrA}{\annA}{\eps \distA_1/3}$ look all roughly the same.
In other words, there may be points much closer than $\annA$, when we
are close enough to $\annA$. Thus in a small enough neighborhood
around $\annA$ we need to zoom in and possibly create a new region.

\subsection{Correctness}

\begin{lemma}%
    \lemlab{too-far}%
    If $\distM{\mtrA}{\query}{\pntA} \geq 2\diam' + 2\diam'/\eps$ then
    $\pntA$ is a valid $(1+\eps)$-\ANN.
\end{lemma}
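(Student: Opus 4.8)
The plan is to show that when $\query$ is extremely far from the fixed point $\pntA\in\PntSet$, the point $\pntA$ itself is already a good enough approximate nearest neighbor, simply because the entire point set $\PntSet$ is squeezed into a region of diameter roughly $\diam$, which is tiny compared to $\dist{\query}{\pntA}$. First I would recall that $\diam' \le \diam \le \diameter{\PntSet}$ wait — more precisely $\diam'$ is a $2$-approximation, so $\diam/2 \le \diam' \le \diam$ (or whatever orientation the paper fixes); I only need $\diam \le 2\diam'$, i.e.\ that $\diam'$ is not a gross underestimate in the relevant direction, and in fact the useful inequality is $\diameter{\PntSet} \le 2\diam'$. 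Thus for the true nearest neighbor $\nnA$ of $\query$ we have $\dist{\pntA}{\nnA}\le \diameter{\PntSet}\le 2\diam'$.

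Next I would bound $\dist{\query}{\nnA}$ from below. Since $\pntA\in\PntSet$ is a candidate, $\dist{\query}{\nnA}\le\dist{\query}{\pntA}$; more usefully, by the triangle inequality $\dist{\query}{\nnA}\ge \dist{\query}{\pntA}-\dist{\pntA}{\nnA}\ge \dist{\query}{\pntA}-2\diam'$. Combining with the hypothesis $\dist{\query}{\pntA}\ge 2\diam'+2\diam'/\eps$, this gives $\dist{\query}{\nnA}\ge 2\diam'/\eps$, hence $2\diam'\le \eps\,\dist{\query}{\nnA}$. Then
\begin{equation*}
    \dist{\query}{\pntA} \le \dist{\query}{\nnA}+\dist{\nnA}{\pntA}
    \le \dist{\query}{\nnA}+2\diam' \le (1+\eps)\,\dist{\query}{\nnA},
\end{equation*}
which is exactly the claim that $\pntA$ is a valid $(1+\eps)$-\ANN.

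The only place requiring any care — and the step I would flag as the main (minor) obstacle — is getting the constant right in the relationship between $\diam'$, $\diameter{\PntSet}$, and hence $\dist{\pntA}{\nnA}$: the argument needs $\dist{\pntA}{\nnA}\le 2\diam'$, which follows from $\diam'$ being a $2$-approximation of the diameter in the sense that $\diameter{\PntSet}\le 2\diam'$. Given the threshold $2\diam'+2\diam'/\eps$ chosen in the algorithm, this is precisely the constant that makes the two uses of ``$2\diam'$'' above cancel correctly, so no slack is wasted. Everything else is a two-line triangle-inequality computation as displayed.
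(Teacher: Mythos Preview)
Your proposal is correct and follows essentially the same approach as the paper's proof: both use $\diameter{\PntSet}\le 2\diam'$, apply the triangle inequality to get $\dist{\query}{\nnA}\ge \dist{\query}{\pntA}-2\diam'\ge 2\diam'/\eps$ (the paper rewrites this via $\diam$ rather than $2\diam'$, but it is the same computation), and then conclude $\dist{\query}{\pntA}\le \dist{\query}{\nnA}+2\diam'\le(1+\eps)\dist{\query}{\nnA}$. The only difference is cosmetic, and your identification of the needed direction of the $2$-approximation is exactly right.
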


\begin{proof}
    Since $\diam'$ is a $2$-approximation to the diameter of
    $\PntSet$, so $2 \diam' \geq \diam = \diameterX{\PntSet}$. This
    means $\distM{\mtrA}{\query}{\pntA} \geq \diam + \diam/\eps$. Let
    $\nnA \in \PntSet$ be the closest point to $\query$. By the
    triangle inequality,
    \begin{equation*}
        \diam + \diam / \eps \leq \distM{\mtrA}{\query}{\pntA} 
        \leq \distM{\mtrA}{\query}{\nnA} + \distM{\mtrA}{\nnA}{\pntA} 
        \leq \distM{\mtrA}{\query}{\nnA} + \diam.
    \end{equation*}
    As such $\diam \leq \eps \distM{\mtrA}{\query}{\nnA}$.  We
    conclude $\distM{\mtrA}{\query}{\pntA} \leq
    \distM{\mtrA}{\query}{\nnA} + \distM{\mtrA}{\nnA}{\pntA} \leq (1 +
    \eps) \distM{\mtrA}{\query}{\nnA}$.
\end{proof}

\begin{lemma}%
    \lemlab{no-region}%
    If there is no region in $\RegionSet$ containing $\query$ then the
    algorithm outputs a valid $(1 + \eps/10)$-\ANN.
\end{lemma}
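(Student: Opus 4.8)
The plan is to analyze the two branches of the algorithm after the test ``$\exists \RegionA \in \RegionSet$ with $\query \in \RegionA$'' has failed, and show that in each branch the returned point $\algonnA$ is a valid $(1+\eps)$-\ANN of $\query$ in $\PntSet$. Throughout we will use the standard facts that $\distA_1 = \dist{\query}{\algonnA} \le (1+\eps/10)\dist{\query}{\nnA}$, where $\nnA$ is the true nearest neighbor, and that $\nnA \notin \ball{\algonnA}{\eps \distA_1/3}$ would follow once we control the radii; the key is to locate $\nnA$ relative to the ball $\ball{\algonnA}{\eps\distA_1/3}$.

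First I would handle the easy branch, where $\PntSet \setminus \ball{\algonnA}{\eps\distA_1/3}$ is empty. Then every point of $\PntSet$ lies within $\eps\distA_1/3$ of $\algonnA$, so in particular $\dist{\algonnA}{\nnA} \le \eps\distA_1/3$, and by the triangle inequality $\distA_1 = \dist{\query}{\algonnA} \le \dist{\query}{\nnA} + \eps\distA_1/3$, i.e. $\dist{\query}{\algonnA} \le \dist{\query}{\nnA}/(1-\eps/3) \le (1+\eps)\dist{\query}{\nnA}$ for $\eps$ small enough. So $\algonnA$ is a valid \ANN; the region $\ball{\query}{\diam'/4}$ created here is not needed for this particular query's correctness, only for caching, so nothing more is required in this branch.

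Next I would handle the main branch, where $\algonnB$ is the $(1+\eps/10)$-\ANN of $\query$ in $\PntSet\setminus\ball{\algonnA}{\eps\distA_1/3}$ and $\distA_2 = \dist{\query}{\algonnB}$. The dichotomy is on where $\nnA$ lives. If $\nnA \in \ball{\algonnA}{\eps\distA_1/3}$, then as in the easy branch $\dist{\algonnA}{\nnA}\le\eps\distA_1/3$ gives $\dist{\query}{\algonnA}\le(1+\eps)\dist{\query}{\nnA}$ directly. If instead $\nnA \notin \ball{\algonnA}{\eps\distA_1/3}$, then $\nnA$ is a candidate in the second \ANN search, so $\distA_2 \le (1+\eps/10)\dist{\query}{\algonnB^*} \le (1+\eps/10)\dist{\query}{\nnA} \le (1+\eps/10)\dist{\query}{\algonnA} = (1+\eps/10)\distA_1$; combined with $\dist{\query}{\nnA}\ge \distA_1/(1+\eps/10)$ this shows $\distA_1$ and $\dist{\query}{\nnA}$ agree up to a $(1+\eps/10)^2 \le (1+\eps)$ factor, so again $\algonnA$ is a valid \ANN. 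The only subtlety is that we must also confirm $\distA_1 \le \distA_2$ (i.e. the ball-removal genuinely excluded only near-optimal points) so that $\algonnA$ beats $\algonnB$; this follows because $\algonnA$ itself was not removed only if $\algonnA \notin \ball{\algonnA}{\eps\distA_1/3}$, which is immediate since the ball has positive radius and is centered at $\algonnA$ — so $\algonnA$ is still available and $\distA_1 \le \distA_2$ may fail, but it does not matter: we return $\algonnA$, not $\min(\algonnA,\algonnB)$, and the bound above already certifies $\algonnA$.

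The main obstacle I anticipate is getting the constants in the radii to line up so that the case $\nnA\notin\ball{\algonnA}{\eps\distA_1/3}$ truly does force $\nnA$ (or a point at least as good) to be retained in the second search — one must verify that removing $\ball{\algonnA}{\eps\distA_1/3}$ cannot delete the nearest neighbor unless the nearest neighbor is within a $(1+\eps)$ factor of $\algonnA$ anyway. This is exactly the quantitative heart of the lemma, and the chosen constant $\eps/3$ (versus the $\eps/10$ used in the \ANN calls and $\eps/5$ used for the outer region radius) is what makes the two-factor composition stay below $1+\eps$; I would carry out that constant-chase carefully but it is otherwise routine triangle-inequality bookkeeping.
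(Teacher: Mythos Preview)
Your argument is correct, but you have worked far harder than necessary. The paper's proof is a single sentence: the algorithm, upon finding no cached region, explicitly computes $\algonnA$ as a $(1+\eps/10)$-\ANN of $\query$ in $\PntSet$ and returns it; since $1+\eps/10 \le 1+\eps$, that is already a valid $(1+\eps)$-\ANN. You even state this fact yourself at the outset (``$\distA_1 = \dist{\query}{\algonnA} \le (1+\eps/10)\dist{\query}{\nnA}$'') and then proceed as though it were merely an auxiliary inequality rather than the entire proof.

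All of the subsequent case analysis --- whether $\PntSet\setminus\ball{\algonnA}{\eps\distA_1/3}$ is empty, whether $\nnA$ lies inside or outside that ball, whether $\distA_1\le\distA_2$ --- is superfluous for this lemma. Those considerations (the role of $\algonnB$, $\distA_2$, and $\distB$) matter only for certifying that $\algonnA$ remains a valid \ANN throughout the \emph{region} $\RegionA_{\query}$, which is the content of the next lemma (\lemref{region-good}), not this one. For the query $\query$ itself, the correctness of the returned point follows immediately from the black-box \ANN call on line~6 of the algorithm.
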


\begin{proof}
    We output $\annA$ which is a $(1 + \eps/10)$-\ANN of $\query$.
\end{proof}

\begin{lemma}%
    \lemlab{region-good}%
    The $(1 + \eps/10)$-\ANN $\annA$ found in the algorithm is a $(1 +
    \eps)$-\ANN for any point $\queryB \in \RegionA_{\query}$.
\end{lemma}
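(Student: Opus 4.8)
The plan is to fix an arbitrary point $\queryB \in \RegionA_{\query}$, write $w$ for its true nearest neighbor in $\PntSet$, and establish the single inequality $\dist{\queryB}{\algonnA} \le (1+\eps)\dist{\queryB}{w}$; throughout I would assume $\eps$ is sufficiently small (the constants $1/3$, $1/5$, $5/4$ appearing in the algorithm are calibrated for, say, $\eps \le 1/4$). Everything reduces to triangle inequalities anchored on three facts produced by the algorithm: $\distA_1 = \dist{\query}{\algonnA}$ is within a factor $(1+\eps/10)$ of $\dist{\query}{\nnA}$, where $\nnA$ is the nearest neighbor of $\query$ in $\PntSet$; $\distA_2 = \dist{\query}{\algonnB}$ is within a factor $(1+\eps/10)$ of $\dist{\query}{p}$ for every $p \in \PntSet \setminus \ball{\algonnA}{\eps\distA_1/3}$; and $\distB$ upper-bounds $\dist{\algonnA}{p}$ for every $p \in \PntSet \cap \ball{\algonnA}{\eps\distA_1/3}$. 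Since $\RegionA_{\query}$ is built in one of two forms, I would split into two cases accordingly.

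In the first form $\RegionA_{\query} = \ball{\query}{\diam'/4}$, which is produced only when $\PntSet \subseteq \ball{\algonnA}{\eps\distA_1/3}$, so $\diam \le \tfrac23\eps\distA_1$ and hence $\dist{\query}{\queryB} \le \diam'/4 \le \diam/4 \le \tfrac16\eps\distA_1$: the entire point set is a cluster of radius at most $\tfrac13\eps\distA_1$ about $\algonnA$, which itself lies at distance roughly $\distA_1$ from $\queryB$. Two triangle inequalities then give $\dist{\queryB}{w} \ge (1-\tfrac{\eps}{2})\distA_1 \ge \distA_1/3$, and combining with $\dist{\queryB}{\algonnA} \le \dist{\queryB}{w} + \dist{\algonnA}{w} \le \dist{\queryB}{w} + \tfrac13\eps\distA_1$ yields $\dist{\queryB}{\algonnA} \le (1+\eps)\dist{\queryB}{w}$.

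In the second form $\RegionA_{\query} = \ball{\query}{\eps\distA_2/5} \setminus \ball{\algonnA}{5\distB/4\eps}$, so $\queryB$ satisfies both $\dist{\query}{\queryB} \le \eps\distA_2/5$ and $\dist{\queryB}{\algonnA} \ge 5\distB/4\eps$, and I would split further on whether $w$ lies inside or outside the inner ball $\ball{\algonnA}{\eps\distA_1/3}$. If $w$ is inside, then $\dist{\algonnA}{w} \le \distB$, so $\dist{\queryB}{w} \ge \dist{\queryB}{\algonnA} - \distB \ge \distB \cdot \tfrac{5-4\eps}{4\eps} \ge \distB/\eps$, that is $\distB \le \eps\dist{\queryB}{w}$, whence $\dist{\queryB}{\algonnA} \le \dist{\queryB}{w} + \distB \le (1+\eps)\dist{\queryB}{w}$. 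If $w$ is outside, then $w$ is eligible in the second anchor fact, so $\dist{\query}{w} \ge \distA_2/(1+\eps/10)$; moreover $\distA_1 \le (1+\eps/10)\dist{\query}{\nnA} \le (1+\eps/10)\distA_2$ since $\algonnB \in \PntSet$. Hence $\dist{\queryB}{w} \ge \dist{\query}{w} - \dist{\query}{\queryB} \ge (1-\tfrac{3\eps}{10})\distA_2$ and $\dist{\queryB}{\algonnA} \le \dist{\query}{\queryB} + \distA_1 \le (1+\tfrac{3\eps}{10})\distA_2$, and since $\tfrac{1+3\eps/10}{1-3\eps/10} \le 1+\eps$ for $\eps \le 1$ we are done.

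The conceptual skeleton --- first which branch built the region, then whether $w$ is inside or outside $\ball{\algonnA}{\eps\distA_1/3}$ --- is dictated by the geometry of \figref{region}: the outer ball $\ball{\query}{\eps\distA_2/5}$ keeps distant points of $\PntSet$ irrelevant, while excising $\ball{\algonnA}{5\distB/4\eps}$ keeps $\queryB$ far enough from $\algonnA$ that the tiny cluster around it looks uniform to $\queryB$. I expect the only genuine obstacle to be the bookkeeping in the last sub-case, where the two approximate nearest-neighbor distances $\distA_1$ and $\distA_2$ together with the displacement $\dist{\query}{\queryB}$ between the query that created the region and the query now being answered must all be reconciled so that the concrete constants cooperate; this is a finite verification rather than a real difficulty.
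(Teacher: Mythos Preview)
Your argument is correct. The first case (the full-ball region) matches the paper closely. For the annular region, however, you and the paper organize the case split differently. The paper splits on the relative sizes of $\distA_1$ and $\distA_2$: when $\distA_2 \le 2\distA_1$ it bounds both $\dist{\queryB}{\pntC}$ and $\dist{\queryB}{\algonnA}$ directly in terms of $\distA_1$, while when $\distA_2 > 2\distA_1$ it first \emph{proves} that the true nearest neighbor of $\queryB$ must land in $\ball{\algonnA}{\distB}$ and only then runs the ``uniform cluster'' estimate. You instead split on whether $w$ lies inside or outside $\ball{\algonnA}{\eps\distA_1/3}$, which lets you skip that intermediate localization step entirely: if $w$ is outside you invoke the $\distA_2$ guarantee directly, and if $w$ is inside you invoke the $\distB$ bound directly. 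Your organization is a bit more economical --- in particular, the observation $\distA_1 \le (1+\eps/10)\distA_2$ replaces the paper's $\distA_2 \le 2\distA_1$ threshold and makes the outside-$w$ sub-case go through uniformly. The paper's split, on the other hand, mirrors the two geometric regimes (near versus far second neighbor) more explicitly, which is perhaps why the authors chose it.
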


\begin{proof}
    Let $\distA_1 = \distM{\mtrA}{\query}{\annA}$ and $\distA_2 =
    \distM{\mtrA}{\query}{\algoonnB}$. There are two possibilities.
    
    If the region $\RegionA_{\query}$ is the ball
    $\ballExt{\mtrA}{\query}{\diam'/4}$ constructed when there is no
    point in $\PntSet \setminus \ballExt{\mtrA}{\annA}{\eps \distA_1 /
       3}$, then $\diam = \diameterX{\PntSet} \leq 2 \eps \distA_1 /
    3$.  As such,
    \begin{equation*}
        \distM{\mtrA}{\query}{\PntSet} \geq 
        \frac{\distM{\mtrA}{\query}{\annA}}{1+\eps/10} = 
        \frac{\distA_1}{1 + \eps/10}
        \geq \frac{3\diam}{2\eps(1 + \eps/10)}
        = \frac{3\diam}{2\eps + \eps^2/5} 
        \geq (4/3) \frac{\diam}{\eps}.
    \end{equation*}
    It is not hard to see that in this case, $\annA$ is a valid $(1 +
    \eps)$-\ANN for any point inside
    $\ballExt{\mtrA}{\query}{\diam'/4} \subseteq
    \ballExt{\mtrA}{\query}{\diam/4}$, as
    $\distM{\mtrA}{\ballExt{\mtrA}{\query}{\diam/4}}{\PntSet} \geq
    \diam/\eps$, for $\eps$ sufficiently small.

    \smallskip
    
    Otherwise, if the set $\PntSet \setminus
    \ballExt{\mtrA}{\annA}{\eps \distA_1/3}$ is nonempty then let
    $\algoonnB$ be a $(1 + \eps/10)$-\ANN of $\query$ in $\PntSet
    \setminus \ballExt{\mtrA}{\annA}{\eps \distA_1/3}$ and let
    $\distA_2 = \distM{\mtrA}{\query}{\algoonnB}$. We break the
    analysis into two cases.
    
    \begin{enumerate}[(i)]
        \item If $\distA_2 \leq 2 \distA_1$, then let $\queryB$ be any
        point in $\RegionA_{\query}$ and let $\nnB \in \PntSet$ be its
        nearest neighbor. If $\nnB = \annA$ there is nothing to
        show. Otherwise $\distM{\mtrA}{\query}{\queryB} \leq \eps
        \distA_2/5$ and by the triangle inequality we have
        \begin{align*}
            \distM{\mtrA}{\queryB}{\nnB} %
            &\geq% 
            \distM{\mtrA}{\query}{\nnB} - \distM{\mtrA}{\query}{\queryB}%
            \geq \distM{\mtrA}{\query}{\nnB} - \eps \distA_2 / 5 \\%
            &\geq%
            \distM{\mtrA}{\query}{\annA}/(1 + \eps/10) -
            \eps 2 \distA_1 / 5 \\
            &\geq%
            (1 - \eps/2) \distA_1,
        \end{align*}
        as $\distM{\mtrA}{\query}{\nnB} \geq
        \distM{\mtrA}{\query}{\PntSet} \geq
        \distM{\mtrA}{\query}{\annA}/(1 + \eps/10)$ and $\distA_1 =
        \distM{\mtrA}{\query}{\annA}$.  Again, by the triangle
        inequality and the above, we have
        \begin{align*}
            \distM{\mtrA}{\queryB}{\annA} &\leq%
            \distM{\mtrA}{\query}{\annA} +
            \distM{\mtrA}{\query}{\queryB}%
            \leq%
            \distM{\mtrA}{\query}{\annA} + 2 \eps \distA_1 / 5%
            = (1 + 2 \eps/5) \distA_1%
            \\%
            &\leq \frac{1 + 2 \eps/5}{1 - \eps/2}
            \distM{\mtrA}{\queryB}{\nnB}%
            \leq (1 + \eps)\distM{\mtrA}{\queryB}{\nnB},
        \end{align*}
        for $\eps \leq 1/5$.
        
        \item If $\distA_2 > 2 \distA_1$ then let $\annAf$ be the
        furthest point from $\annA$ inside
        $\ballExt{\mtrA}{\annA}{\eps \distA_1/3}$ and let $\distB =
        \distM{\mtrA}{\annA}{\annAf}$. Let $\queryB$ be any point in
        $\RegionA_{\query}$ and as before let $\nnB \in \PntSet$ be
        its nearest neighbor.  We claim that the nearest neighbor of
        $\queryB$ in $\PntSet$ lies in
        $\ballExt{\mtrA}{\annA}{\distB}$. To see this, let $\pntD$ be
        any point in $\PntSet \setminus
        \ballExt{\mtrA}{\annA}{\distB}$. Noting that the distance from
        $\query$ to the closest point in $\PntSet$ outside
        $\ballExt{\mtrA}{\annA}{\distB}$ is at least $\distA_2/(1 +
        \eps/10)$ and by triangle inequality we have,
        \begin{align*}
            \distM{\mtrA}{\queryB}{\pntD}%
            &\geq%
            \distM{\mtrA}{\query}{\pntD} -
            \distM{\mtrA}{\query}{\queryB}%
            \geq \distM{\mtrA}{\query}{\pntD} - \eps \distA_2/5%
            \\%
            &\geq%
            \distA_2/(1 + \eps/10) - \eps \distA_2/5%
            > (1 - 3 \eps / 10) \distA_2.
        \end{align*}
        
        On the other hand, as $\distA_1 =
        \distM{\mtrA}{\query}{\annA}$ and $\distA_1 < \distA_2 / 2$,
        we have
        \begin{align*}
            \distM{\mtrA}{\queryB}{\annA}%
            &\leq%
            \distM{\mtrA}{\query}{\annA} +
            \distM{\mtrA}{\query}{\queryB}%
            \leq%
            \distM{\mtrA}{\query}{\annA} + \eps \distA_2 / 5%
            =%
            \distA_1 + \eps \distA_2 / 5%
            <%
            \distA_2 / 2 + \eps \distA_2 / 5%
            \\ &%
            \leq%
            (1 - 3\eps/10) \distA_2%
            <%
            \distM{\mtrA}{\queryB}{\pntD},
        \end{align*}
        by the above. As such, no point in $\PntSet \setminus
        \ballExt{\mtrA}{\annA}{\distB}$ can be the nearest neighbor of
        $\queryB$ for $\eps < 1$. As such $\nnB \in
        \ballExt{\mtrA}{\annA}{\distB}$. Now,
        \begin{equation}%
            \eqnlab{dist:u:b}%
            \distM{\mtrA}{\queryB}{\annA}
            \leq \distM{\mtrA}{\queryB}{\nnB} +
            \distM{\mtrA}{\nnB}{\annA} \leq
            \distM{\mtrA}{\queryB}{\nnB} + \distB.
        \end{equation}
        Now $\queryB \in \RegionA_{\query} =
        \ballExt{\mtrA}{\query}{\eps \distA_2 / 5} \setminus
        \ballExt{\mtrA}{\annA}{5 \distB /4 \eps}$, and thus
        $\distM{\mtrA}{\queryB}{\annA} > 5 \distB /4 \eps$.  Thus,
        \begin{equation}%
            \eqnlab{dist:l:b}%
            \distM{\mtrA}{\queryB}{\nnB}
            \geq%
            \distM{\mtrA}{\queryB}{\annA} -
            \distM{\mtrA}{\annA}{\nnB} \geq
            \distM{\mtrA}{\queryB}{\annA} - \distB%
            \geq%
            \pth{\frac{5}{4 \eps} - 1} \distB.
        \end{equation}
        Therefore from \eqnref{dist:u:b} and \eqnref{dist:l:b}, we have
        \begin{align*}
            \distM{\mtrA}{\queryB}{\annA}%
            &\leq%
            \distM{\mtrA}{\queryB}{\nnB} + \distB %
            \leq%
            \pth{\MakeSBig 1 + \frac{1}{5/4\eps - 1}}
            \distM{\mtrA}{\queryB}{\nnB}%
            =%
            \pth{\MakeSBig 1 + \frac{4\eps}{5 - 4\eps}}
            \distM{\mtrA}{\queryB}{\nnB}\\
            &\leq%
            (1 + \eps) \distM{\mtrA}{\queryB}{\nnB}.
        \end{align*}
        for $\eps \leq 1/4$.
    \end{enumerate}
\end{proof}

\subsection{Bounding the number of regions created }

The online algorithm presented in \figref{online-ann} is valid for any
general metric space $\mtrA$, without any restriction on the subspace
of query points. However, when the query points are restricted to lie
in a subspace $\manifold$ of low doubling dimension $\dd$, then one
can show that at most $n \eps^{-\porder{\dd}}$ regions are created
overall, where $n = \cardin{\PntSet}$. There are two types of regions
created. The \emphi{outer} regions are created when $\PntSet \setminus
\ballExt{\mtrA}{\annA}{\eps \distA_1 / 3}$ is empty and the
\emphi{inner} regions are created when this condition does not
hold. An example of an inner region is shown in \figref{region}.

\subsubsection{Bounding the number of outer regions}

First we show that there are at most $\eps^{-\porder{\dd}}$ outer
regions created.

\begin{lemma}%
    \lemlab{ftrub}%
    When all the queries to the algorithm come from a subspace of
    doubling dimension $\dd$, then at most $\eps^{-\porder{\dd}}$
    outer regions 
    are created overall.
\end{lemma}

\begin{proof}
    Any two query points creating distinct outer regions occur at a
    distance of at least $\diam'/4$ from each other. However all of
    them occur inside a ball of radius $4\diam'/\eps$ around $\pntA$.
    Thus the spread of the set containing all these query points is
    bounded by $ \pth{4\diam'/\eps} \,/ \, \pth[]{\diam'/4} =
    O(1/\eps)$. As such, there are at most $\eps^{-\porder{\dd}}$ such
    points.
\end{proof}

\subsubsection{Bounding the number of inner regions}

We now consider the inner regions created by the algorithm.  Consider
the mapped point set $\mapped{\PntSet}$ in the space $\mtrB$, see
\secref{embedding}. Fix a $c$-\WSPD $\brc{ \MakeSBig\!
   \brc{\mapped{A_1},\mapped{B_1}},\dots,\brc{\mapped{A_s},\mapped{B_s}}}$
of $\mapped{\PntSet}$ where $c$ is a constant to be specified shortly
and $s = c^{\porder{\dd}} n$ is the number of pairs. Let 
$A_i, B_i \subseteq \PntSet$ denote the corresponding ``unmapped'' points
corresponding to $\mapped{A_i}, \mapped{B_i}$, that is,
$A_i = \brc{\pntA \in \PntSet ~ | ~ \mapped{\pntA} \in \mapped{A_i}}$
and $B_i = \brc{\pntA \in \PntSet ~ | ~ \mapped{\pntA} \in \mapped{B_i}}$.
If a query point
$\query$ creates a new inner region we shall assign it to a set
$\QuerySetA_i$ associated with the pair
$\brc{\mapped{A_i},\mapped{B_i}}$, if the pair of points
$\mapped{\annA}, \mapped{\algoonnB}$ of the algorithm satisfy
$\mapped{\annA} \in \mapped{A_i}$ and $\mapped{\algoonnB} \in
\mapped{B_i}$.  Similarly assign $\query$ to the set $\QuerySetB_i$ if
$\mapped{\annA} \in \mapped{B_i}$ and $\mapped{\algoonnB} \in
\mapped{A_i}$.

Thus, the query points that gave rise to new regions are now
associated with pairs of the \WSPD.  Our analysis bounds the size of
the sets $\QuerySetA_i$ and $\QuerySetB_i$ associated with a pair
$\brc{\mapped{A_i},\mapped{B_i}}$, for $i=1,\ldots, s$, thus bounding
the total number of regions created.

Let $\mapped{\QuerySetA_i} = \brc{\mapped{\query} \sep{\query \in
      \QuerySetA_i}} \subseteq \mtrB $ and $\mapped{\QuerySetB_i} =
\brc{\mapped{\query} \sep{\query \in \QuerySetB_i}}$, for $i=1,\ldots,
s$.  For a pair $\brc{\mapped{A_i},\mapped{B_i}}$ of the \WSPD we
define the numbers $\hMax{\mapped{A_i}} = \max_{(\pntC,h) \in
   \mapped{A_i}} h$. Similarly let $\hMax{\mapped{B_i}} =
\max_{(\pntD,h) \in \mapped{B_i}} h$. Also, let
\begin{align*}
    \lDist{i} = \max_{\mapped{\pntC} \in \mapped{A_i}, \mapped{\pntD}
       \in \mapped{B_i} }
    \distM{\mtrA}{\closest{\pntC}}{\closest{\pntD}}
    \qquad \text{ and } \qquad %
    \LDist{i} = \lDist{i} + \hMax{\mapped{A_i}} + \hMax{\mapped{B_i}}.
\end{align*}

The following sequence of lemmas establish our claim. The basic
strategy is to show that the set $\mapped{\QuerySetA_i}$ has spread
$\order{1/\eps^2}$. This holds analogously for $\mapped{\QuerySetB_i}$
and so we will only work with $\mapped{\QuerySetA_i}$. We will assume
that $c$ is a sufficiently large constant and $\eps$ is sufficiently
small.

\begin{lemma}%
    \lemlab{good:diam}%
    For any $i$, we have $\diameterY{\mtrB}{\mapped{A_i}} \leq
    \LDist{i}/c $ and $\diameterY{\mtrB}{\mapped{B_i}} \leq
    \LDist{i}/c$.
\end{lemma}

\begin{proof}
    By the construction of the \WSPD, we have that
    $\diameterY{\mtrB}{\mapped{A_i}} \leq \distM{\mtrB}{\mapped{A_i}}
    {\mapped{B_i}}/c$. Moreover, we have
    \begin{align*}
        \distM{\mtrB}{\mapped{A_i}} {\mapped{B_i}}%
        &=%
        \min_{\pntA' \in \mapped{A_i}, \pntB' \in \mapped{B_i}}
        \distM{\mtrB}{\pntA'}{\pntB'}%
        =%
        \min_{\pntA' \in \mapped{A_i}, \pntB' \in \mapped{B_i}} \pth{
           \MakeBig \distM{\mtrA}{\MakeSBig
              \closest{\pntA}}{\closest{\pntB}} + \cardin{
              \height{\pntA} - \height{\pntB} }}
        \\
        &\leq %
        \,\lDist{i} + \min_{\pntA' \in \mapped{A_i}, \pntB' \in
           \mapped{B_i}} \pth{\MakeBig \cardin{ \height{\pntA}} +
           \cardin{ \height{\pntB} }}%
        \leq %
        \lDist{i} + \hMax{\mapped{A_i}} + \hMax{\mapped{B_i}} =
        \LDist{i}. 
    \end{align*}
    This implies that $\diameterY{\mtrB}{\mapped{A_i}} \leq
    \LDist{i}/c$, and similarly $\diameterY{\mtrB}{\mapped{B_i}} \leq
    \LDist{i}/c$.
\end{proof}

\begin{lemma}%
    \lemlab{diamub}%
    We have $\diameter{\mapped{\QuerySetA_i}} =
    \order{\LDist{i}/\eps}$.
\end{lemma}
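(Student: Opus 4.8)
The plan is to bound the diameter of $\mapped{\QuerySet_{i,1}}$ by relating every query point in this set to the common \WSPD pair $\brc{\mapped{A_i},\mapped{B_i}}$, and then to the fixed scale $\LDist{i}$. First I would fix an arbitrary query point $\query \in \QuerySet_{i,1}$ and unpack what the algorithm guarantees: $\algonnA$ is a $(1+\eps/10)$-\ANN of $\query$ with $\mapped{\algonnA}\in\mapped{A_i}$, $\algonnB$ is a $(1+\eps/10)$-\ANN of $\query$ in $\PntSet\setminus\ball{\algonnA}{\eps\distA_1/3}$ with $\mapped{\algonnB}\in\mapped{B_i}$, and we are in the second-type case so $\distA_2>2\distA_1$ (the case $\distA_2\le 2\distA_1$ cannot produce a point of $\QuerySet_{i,1}$, or is handled separately). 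The \WSPD separation says $\diameterX{\mapped{A_i}}$ and $\diameterX{\mapped{B_i}}$ are at most $\lDist{i}/c$ (up to the height terms absorbed into $\LDist{i}$), and $\distM{\mtrB}{\mapped{A_i}}{\mapped{B_i}}$ is within a constant factor of $\LDist{i}$. Using \lemref{embed:lowdd}(B), $\dist{\query}{\algonnA}$ and $\dist{\query}{\algonnB}$ in $\mtrA$ are sandwiched (up to factor $3$) by the corresponding $\mtrB$-distances $\distM{\mtrB}{\mapped{\query}}{\mapped{\algonnA}}$ and $\distM{\mtrB}{\mapped{\query}}{\mapped{\algonnB}}$, since $\query\in\manifold$.

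Next I would derive the two scale bounds. Since $\distA_1=\dist{\query}{\algonnA}$ is within a constant of the distance from $\mapped{\query}$ to the set $\mapped{A_i}$, and $\distA_2=\dist{\query}{\algonnB}$ is within a constant of the distance from $\mapped{\query}$ to $\mapped{B_i}$, the triangle inequality in $\mtrB$ combined with $\distM{\mtrB}{\mapped{A_i}}{\mapped{B_i}}\asymp\LDist{i}$ forces $\distA_2 = \order{\distA_1 + \LDist{i}}$. The reverse-direction constraint is the key: because $\distA_2 > 2\distA_1$ and both $\algonnA\in A_i$, $\algonnB\in B_i$ sit within the small-diameter clusters, one gets $\distA_1 = \Omega(\LDist{i})$ as well — intuitively, if $\distA_1$ were much smaller than $\LDist{i}$, then $\mapped{\algonnA}$ and $\mapped{\algonnB}$, which lie within clusters of diameter $\LDist{i}/c$ but at mutual distance $\asymp\LDist{i}$, could not both be near $\mapped{\query}$ at scale $\ll\LDist{i}$. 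So $\distA_1 = \Theta(\LDist{i})$ and hence $\distA_2 = \Theta(\LDist{i})$ as well (using $\distA_2 > 2\distA_1$ for the lower bound on $\distA_2$ and the upper bound just derived). This pins the relevant scale: $\query$ lies within $\order{\LDist{i}}$ of $\closest{\algonnA}$, which in turn lies within $\order{\LDist{i}}$ of any fixed representative of $\mapped{A_i}$.

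Finally, for two query points $\query,\query'\in\QuerySet_{i,1}$ with associated $\algonnA,\algonnA'\in A_i$, I would chain: $\dist{\query}{\query'} \le \dist{\query}{\closest{\algonnA}} + \dist{\closest{\algonnA}}{\closest{\algonnA'}} + \dist{\closest{\algonnA'}}{\query'}$. The first and last terms are $\order{\distA_1}=\order{\LDist{i}}$ and $\order{\distA_1'}=\order{\LDist{i}}$ by the scale bound; the middle term is at most $\diameterX{\mapped{A_i}}\le\lDist{i}/c\le\LDist{i}$. Actually to get the stated $\order{\LDist{i}/\eps}$ I expect one must be more careful — the radius $\eps\distA_2/5$ of the region itself, or the $5\distB/4\eps$ exclusion ball, introduces a $1/\eps$ factor when relating $\distA_1$ to $\LDist{i}$ via $\distB$, so the scale bound should read $\distA_1 = \order{\LDist{i}/\eps}$ rather than $\order{\LDist{i}}$. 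I expect the main obstacle to be exactly this bookkeeping: carefully tracking where the extra $1/\eps$ enters (it comes from the fact that $\algonnB$ is only the nearest neighbor \emph{outside} a ball of radius $\eps\distA_1/3$, so the gap between $\distA_1$ and $\distA_2$ can be as small as a $1/\eps$-controlled multiple rather than a constant multiple), and making sure the constant $c$ of the \WSPD is chosen large enough relative to these $\eps$-free constants so that the "$\PDist\le 0$"-style contradictions do not arise and the clusters really are negligible at the working scale.
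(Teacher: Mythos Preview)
Your proposal has the right skeleton --- bound the distance from each $\query\in\QuerySet_{i,1}$ to a fixed anchor in $\mapped{A_i}$, then chain --- but the middle paragraph goes off the rails. First, being a ``second-type'' region does \emph{not} imply $\distA_2>2\distA_1$; that dichotomy appears only in the correctness proof of \lemref{region-good}, and both subcases produce second-type regions. Second, your claim $\distA_1=\Omega(\LDist{i})$ is false and unnecessary: nothing prevents $\query$ from being very close to $A_i$ while $B_i$ sits at distance $\asymp\LDist{i}$. You only need an \emph{upper} bound on $\distA_1$, and it is $O(\LDist{i}/\eps)$, not $O(\LDist{i})$.

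You sense this in your last paragraph but never land on the actual mechanism. The paper's argument is a two-line observation you are circling without stating: since $\algonnB\notin\ball{\algonnA}{\eps\distA_1/3}$ we have $\dist{\algonnA}{\algonnB}>\eps\distA_1/3$, while the triangle inequality through the projections gives $\dist{\algonnA}{\algonnB}\le\height{\algonnA}+\dist{\closest{\algonnA}}{\closest{\algonnB}}+\height{\algonnB}\le\LDist{i}$. Combining, $\distA_1<3\LDist{i}/\eps$, hence $\distM{\mtrB}{\mapped{\query}}{\mapped{\algonnA}}\le 3\distA_1<9\LDist{i}/\eps$ by \lemref{embed:lowdd}(B). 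Any two queries in $\QuerySet_{i,1}$ are then within $9\LDist{i}/\eps+\diameterX{\mapped{A_i}}+9\LDist{i}/\eps=O(\LDist{i}/\eps)$ of each other. There is no need to reason about $\distA_2$, the region radii, or $\distB$ at all for this lemma; those enter only in the later separation bound (\lemref{qsws}).
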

\begin{proof}
    Let $\query$ be a (query) point in $\QuerySetA_i$. By assumption
    we have $\mapped{\annA} \in \mapped{A_i}$ and $\mapped{\algoonnB}
    \in \mapped{B_i}$.  By the triangle inequality,
    \begin{align*}
        \distM{\mtrA}{\annA}{\algoonnB}%
        &\leq%
        \distM{\mtrA}{\annA}{\closest{\annA}} +
        \distM{\mtrA}{\closest{\annA}}{\closest{\algoonnB}}+
        \distM{\mtrA}{\closest{\algoonnB}}{\algoonnB} %
        \leq%
        \hMax{\mapped{A_i}} + \lDist{i} + \hMax{\mapped{B_i}} \\
        &\leq%
        \LDist{i}.
    \end{align*}
    On the other hand, since the point $\algoonnB$ is outside
    $\ballExt{\mtrA}{\annA}{\eps \distA_1 /3}$, we have that
    $\distM{\mtrA}{\annA}{\algoonnB} > \eps \distA_1/3$, where
    $\distA_1 = \distM{\mtrA}{\query}{\annA}$.  This gives us
    $\distA_1 < (3/\eps) \distM{\mtrA}{\annA}{\algoonnB} < 3\LDist{i}/
    \eps$.  By \lemref{embed:low:d:d},
    $\distM{\mtrB}{\mapped{\query}}{\mapped{\annA}} \leq 3
    \distM{\mtrA}{\query}{\annA} = 3 \distA_1 < 9\LDist{i}/\eps$.
    Also, we have,
    \begin{align}
        \distM{\mtrB}{\mapped{\annA}}{\mapped{\algoonnB}} 
        &=%
        \distM{\mtrA}{\closest{\annA}}{\closest{\algoonnB}} + 
        \abs{\height{\annA} - \height{\algoonnB}} %
        \leq%
        \lDist{i} + \hMax{\mapped{A_i}} + \hMax{\mapped{B_i}} =
        \LDist{i}.
        \eqnlab{L:i:bounds}
    \end{align}
    Let $\queryB$ be any other point in $\QuerySetA_i$, and let the
    points $\annAB$ and $\annBB$ be the points found by the algorithm
    such that $\mapped{\annAB} \in \mapped{A_i}$ and $\mapped{\annBB}
    \in \mapped{B_i}$. Since $\mapped{\annA}$ is also in
    $\mapped{A_i}$, we have by \lemref{good:diam} that
    $\distM{\mtrB}{\mapped{\annA}}{\mapped{\annAB}} \leq
    \diameterY{\mtrB}{A_i} \leq \LDist{i} / c$. As such,
    \begin{align*}
        \diameter{\mapped{\QuerySetA_i}} &=
        \max\limits_{\mapped{\query},\mapped{\queryB} \in
           \mapped{\QuerySetA_i}}
        \distM{\mtrB}{\mapped{\query}}{\mapped{\queryB}} \leq
        \max\limits_{\mapped{\query},\mapped{\queryB} \in
           \mapped{\QuerySetA_i}}
        (\distM{\mtrB}{\mapped{\query}}{\mapped{\annA}} +
        \distM{\mtrB}{\mapped{\annA}}{\mapped{\annAB}}
        + \distM{\mtrB}{\mapped{\queryB}}{\mapped{\annAB}}) \\
        &\leq 9\LDist{i}/\eps + \LDist{i}/c + 9 \LDist{i} / \eps =
        \order{\LDist{i}/\eps},
    \end{align*}
    for $\eps$ small enough.
\end{proof}

\begin{lemma}%
    \lemlab{distA2lb}%
    For a query point $\query$, the associated distances $\distA_2$
    and $\LDist{i}$ satisfy $\distA_2 \geq \LDist{i}/18$.    
\end{lemma}
\begin{proof}
    Let $\mapped{\pntC}$ be the point with maximum height in
    $\mapped{A_i}$; that is $\height{\pntC} = \hMax{\mapped{A_i}}$.
    By \lemref{good:diam}, we have
    $\distM{\mtrB}{\mapped{\pntC}}{\mapped{\annA}} \leq \LDist{i}/c$.
    The definition of the distance in $\mtrB$, gives
    \begin{align*}
        \hMax{\mapped{A_i}} - \height{\annA}%
        \leq%
        \cardin{\hMax{\mapped{A_i}} - \height{\annA} }%
        =%
        \cardin{\height{\pntC} - \height{\annA}}%
        \leq%
        \distM{\mtrB}{\mapped{\pntC}}{\mapped{\annA}}%
        \leq%
        \LDist{i}/c,
    \end{align*}
    and so $\height{\annA} \geq \hMax{\mapped{A_i}} - \LDist{i}/c$.
    Similarly we have, $\height{\algoonnB} \geq \hMax{\mapped{B_i}} -
    \LDist{i}/c$.  We have $\distA_1 = \distM{\mtrA}{\query}{\annA}
    \geq \distM{\mtrA}{\annA}{\manifold} =
    \distM{\mtrA}{\annA}{\closest{\annA}} = \height{\annA}$ and
    similarly $\distA_2 = \distM{\mtrA}{\query}{\algoonnB} \geq
    \height{\algoonnB}$. Noting that, $\distA_2 \geq
    \distM{\mtrA}{\query}{\PntSet} \geq \distA_1/(1 + \eps/10) \geq
    (10/11)\distA_1$ we get,
    \begin{equation} 
        \eqnlab{h:i:n} %
        2.1 \distA_2 = \distA_2 +
        \frac{11}{10} \distA_2 \geq \distA_2 + \distA_1 \geq
        \height{\algoonnB} + \height{\annA} \geq \hMax{\mapped{A_i}} +
        \hMax{\mapped{B_i}} - \frac{2\LDist{i}}{c}.
    \end{equation}
    
    Let $\mapped{\pntD} \in \mapped{A_i}$ and $\mapped{\pntE} \in
    \mapped{B_i}$ be such that
    $\distM{\mtrA}{\closest{\pntD}}{\closest{\pntE}} = \lDist{i}$.
    Observing that $\distM{\mtrB}{\mapped{\query}}{\mapped{\annA}}
    \leq 3 \distM{\mtrA}{\query}{\annA} = 3 \distA_1$ and similarly
    $\distM{\mtrB}{\mapped{\query}}{\mapped{\algoonnB}} \leq 3
    \distM{\mtrA}{\query}{\algoonnB} = 3 \distA_2$, we have by the
    triangle inequality that
    \begin{align*}
        &%
        \distM{\mtrB}{\mapped{\query}}{\mapped{\pntD}} \leq
        \distM{\mtrB}{\mapped{\query}}{\mapped{\annA}} +
        \distM{\mtrB}{\mapped{\annA}}{\mapped{\pntD}} \leq 3 \distA_1
        + \diameterX{\mapped{A_i}} \leq 3 \distA_1 + \LDist{i}/c ,
        \\%
        \text{ and } \qquad &%
        \distM{\mtrB}{\mapped{\query}}{\mapped{\pntE}} \leq
        \distM{\mtrB}{\mapped{\query}}{\mapped{\algoonnB}} +
        \distM{\mtrB}{\mapped{\algoonnB}}{\mapped{\pntE}} \leq 3
        \distA_2 + \diameterX{\mapped{B_i}} \leq 3 \distA_2 +
        \LDist{i}/c,
    \end{align*}
    by \lemref{good:diam}.  By the triangle inequality, we have
    \begin{align*}
        \lDist{i}%
        &\leq%
        \distM{\mtrB}{\mapped{\pntD}}{\mapped{\pntE}} %
        \leq%
        \distM{\mtrB}{\mapped{\pntD}}{\mapped{\query}} +
        \distM{\mtrB}{\mapped{\query}}{\mapped{\pntE}} %
        \leq%
        3 \distA_1 + 3 \distA_2 + \frac{2\LDist{i}}{c}
        \leq%
        6.3 \distA_2 + \frac{2\LDist{i}}{c},
    \end{align*}
    as $\distA_1 \leq (11/10) \distA_2$.  Thus we have,
    \begin{equation}  
        6.3 \distA_2 \geq \lDist{i} -
        \frac{2\LDist{i}}{c}.
        \eqnlab{l:i:n:e:q}
    \end{equation}
    By \Eqref{h:i:n} and \Eqref{l:i:n:e:q}, we have, for $c \geq 8$,
    that
    \begin{align*}
        9 \distA_2 \geq 2.1 \distA_2 + 6.3 \distA_2%
        &\geq%
        \pth{ \hMax{\mapped{A_i}}+\hMax{\mapped{B_i}} - \frac{2 \LDist{i}}{c} }
        + \pth{ \lDist{i}-\frac{2\LDist{i}}{c} } \\
        &=%
        \hMax{\mapped{A_i}} + \hMax{\mapped{B_i}} + \lDist{i} - \frac{4 \LDist{i}}{c}
        \geq \LDist{i} - \frac{\LDist{i}}{2}%
        =%
        \frac{\LDist{i}}{2},
    \end{align*}
    which implies $\LDist{i} \leq 18 \distA_2$.
\end{proof}

Suppose $\queryB$ was added to $\QuerySetA_i$ after $\query$. We want
to show that for $\query, \queryB \in \QuerySetA_i$ we must have
$\distM{\mtrB}{\mapped{\query}}{\mapped{\queryB}} > \eps \distA_2 / 5$
where $\distA_2 = \distM{\mtrA}{\query}{\algoonnB}$. We establish this
through a sequence of lemmas. The proof is essentially by
contradiction, and the next four lemmas assume the contrary to derive
a contradiction. Roughly speaking, the assumption that
$\distM{\mtrB}{\mapped{\query}}{\mapped{\queryB}} = 
 \distM{\mtrA}{\query}{\queryB} \leq \eps \distA_2 / 5$
places $\queryB$ in the chipped off region of the crescent 
region $\RegionA_{\query}$. It turns out that $\queryB$ is
far from both the approximate nearest neighbor of $\query$,
which is $\annA$ and the approximate nearest neighbor
of $\query$ outside an environ of $\annA$, which is $\algoonnB$.
Under the assumption $\query, \queryB \in \QuerySetA_i$ we should
however be able to find the corresponding approximate nearest 
neighbors for $\queryB$ close to those of $\query$. Enforcing the
constraint that the approximate nearest neighbor of $\queryB$
cannot be the second approximate nearest neighbor of $\query$, which
is $\algoonnB$, leads to either counting discrepancies or geometric
contradictions arising from the triangle inequality.

\begin{lemma}%
    \lemlab{qswsp1}%
    Let $\query, \queryB$ be two points of $\QuerySetA_i$, such that
    $\queryB$ was added after $\query$. If
    $\distM{\mtrA}{\query}{\queryB} \leq \eps \distA_2 / 5$, then
    \begin{inparaenum}[(i)]
        \item
        $\distM{\mtrA}{\queryB}{\annA} \leq (5 / 4 \eps)
        \distB$, and
        \item    $\distA_2 \geq (2 / \eps) \distA_1$.
    \end{inparaenum}
\end{lemma}

\begin{proof}
    Since $\queryB$ created a new region it lies outside
    $\RegionA_{\query} = \ballExt{\mtrA}{\query}{\eps \distA_2/ 5}
    \setminus \ballExt{\mtrA}{\annA}{5 \distB / 4 \eps}$.  Since by
    assumption $\queryB \in \ballExt{\mtrA}{\query}{\eps \distA_2 /
       5}$, it must be the case that $\queryB \in
    \ballExt{\mtrA}{\annA}{5 \distB / 4 \eps}$, as otherwise $\queryB
    \in \RegionA_\query$. Thus, these two balls intersect, and
    \begin{align*}
        \frac{\eps}{5} \distA_2 + \frac{5}{4 \eps} \distB %
        \geq%
        \distM{\mtrA}{\query}{\annA} = \distA_1.
    \end{align*}
    But $\distB \leq \eps \distA_1 / 3$ and so $\distA_1 \geq (3 /
    \eps)\distB$, implying
    \begin{align*}
        \frac{\eps}{5} \distA_2 + \frac{5}{12} \distA_1 %
        \geq%
        \frac{\eps}{5} \distA_2 + \frac{5}{12} \cdot \frac{3}{\eps}
        \distB %
        =%
        \frac{\eps}{5} \distA_2 + \frac{5}{4 \eps} \distB %
        \geq%
        \distA_1
        \quad \implies \quad%
        \distA_2 \geq \frac{35}{12 \eps} \distA_1%
        \geq%
        \frac{2}{\eps} \distA_1.
    \end{align*}
    \aftermathA
\end{proof}

\begin{lemma}%
    \lemlab{qswsp2}%
    Let $\query, \queryB$ be two points in $\QuerySetA_i$ such that 
    $\queryB$ was added
    after $\query$. If $\distM{\mtrA}{\query}{\queryB} \leq \eps
    \distA_2 / 5$ then, for sufficiently small $\eps$ and sufficiently
    large $c$, we have that
    \begin{compactenum}[\rm\qquad(A)]
        \item \lempntlab{q:s:A}%
        $\distA_1 \leq \eps \LDist{i}$.

        \item \lempntlab{q:s:B}%
        $\distM{\mtrA}{\annA}{\queryB} \leq 5 \distA_1 / 12 \leq \eps
        \LDist{i}$.

        \item \lempntlab{q:s:C}%
        $\distM{\mtrA}{\queryB}{B_i} \geq \LDist{i}/120$.
    \end{compactenum}
\end{lemma}

\begin{proof}
    \begin{inparaenum}[(A)]
        \item By \Eqref{L:i:bounds} we have
        $\distM{\mtrB}{\mapped{\annA}}{\mapped{\algoonnB}} \leq
        \LDist{i}$.  Now, by \lemref{qswsp1}, we have $\distA_2 \geq
        (2 / \eps) \distA_1$. As such, by the triangle inequality, and
        by \lemref{embed:low:d:d}, we have
        \begin{align}
            \eqlab{above}%
            \LDist{i}%
            &\geq %
            \distM{\mtrB}{\mapped{\annA}}{\mapped{\algoonnB}} %
            \geq%
            \distM{\mtrB}{\mapped{\query}}{\mapped{\algoonnB}} -
            \distM{\mtrB}{\mapped{\query}}{\mapped{\annA}} %
            \geq \distM{\mtrA}{\query}{\algoonnB} -
            3 \distM{\mtrA}{\query}{\annA} \\
            \nonumber &\geq \distA_2 - 3 \distA_1 \geq 2\distA_1/\eps
            - 3 \distA_1 \geq \distA_1 / \eps,
        \end{align}
        for $\eps \leq 1/3$. Thus $\LDist{i} \geq \distA_1 / \eps$.
        
        \item In terms of $\distA_2$, by \Eqref{above}, we have
        \begin{align}
            \distM{\mtrB}{\mapped{\annA}}{\mapped{\algoonnB}} &\geq%
            \distA_2 - 3 \distA_1%
            \geq%
            \distA_2 - \frac{3 \eps \distA_2}{2}%
            \geq%
            \frac{\distA_2}{2}%
            \geq%
            \frac{\LDist{i}}{36},%
            \eqlab{a:b}
        \end{align}
        since by \lemref{distA2lb} $\distA_2 \geq \LDist{i}/{18}$ for
        $\eps \leq 1/3$ and by \lemref{qswsp1} $\distA_1 \leq \eps
        \distA_2/2$.  Now $\queryB$ lies inside
        $\ballExt{\mtrA}{\annA}{(5 / 4 \eps) \distB}$ and as $\distB
        \leq (\eps /3 ) \distA_1$ (see \figref{region}), we have
        \begin{align*}
            \distM{\mtrA}{\annA}{\queryB} \leq (5 / 4 \eps) \distB
            \leq (5 / 4 \eps) (\eps / 3 ) \distA_1 \leq 5 \distA_1 /
            12 \leq \distA_1 \leq \eps \LDist{i},
        \end{align*}
        by (A).
        
        \item Let $\pntD$ be an arbitrary point in $B_i$ and notice
        that by \Eqref{a:b} and the triangle inequality we have,
        \begin{align*}
            \distM{\mtrB}{\mapped{\queryB}}{\mapped{\pntD}}%
            &\geq
            \distM{\mtrB}{\mapped{\annA}}{\mapped{\pntD}} -
            \distM{\mtrB}{\mapped{\queryB}}{\mapped{\annA}}
            \geq \distM{\mtrB}{\mapped{\annA}}{\mapped{\algoonnB}} -
            \distM{\mtrB}{\mapped{\algoonnB}}{\mapped{\pntD}}
            - \distM{\mtrB}{\mapped{\queryB}}{\mapped{\annA}} \\
            &\geq \frac{\LDist{i}}{36} - \diameterX{\mapped{B_i}} -
            3\distM{\mtrA}{\queryB}{\annA}%
            \geq%
            \frac{\LDist{i}}{36} -
            \frac{\LDist{i}}{c} - 3\eps \LDist{i}
            \geq \frac{\LDist{i}}{40},
        \end{align*}
        by \lemref{embed:low:d:d} \lempntref{embed:low:d:d:B} and
        \lemref{good:diam} for
        sufficiently small $\eps$ and sufficiently large $c$. Thus,
        \lemref{embed:low:d:d} \lempntref{embed:low:d:d:B}, implies
        that $\distM{\mtrA}{\queryB}{\pntD} \geq
        \distM{\mtrB}{\mapped{\queryB}}{\mapped{\pntD}}/3 \geq
        \LDist{i}/120$.
    \end{inparaenum}
\end{proof}

\begin{lemma}%
    \lemlab{qswsp3}%
    Let $\query, \queryB$ be two points in $\QuerySetA_i$ such that
    $\queryB$ was added after $\query$, and suppose
    $\distM{\mtrA}{\query}{\queryB} \leq \eps \distA_2 / 5$.  Let
    $\modA_i = A_i \cup \brc{\annAf}$, where $\annAf$ is the furthest
    point from $\annA$ in the set $\ballExt{\mtrA}{\annA}{\eps
       \distA_1 / 3} \cap \PntSet$. Then, for sufficiently small
    $\eps$ and sufficiently large $c$, we have $B_i \cap \modA_i =
    \emptyset$.  In particular, we have $\distM{\mtrA}{\queryB}{B_i} >
    2 \max_{\pntC \in \modA_i} \distM{\mtrA}{\queryB}{\pntC}$.
\end{lemma}

\begin{proof}
    First, let $\pntC$ be any point in $A_i$. Then, by \lemref{embed:low:d:d}
    \lempntref{embed:low:d:d:B}, the triangle inequality, \lemref{good:diam} 
    and
    \lemref{qswsp2} we have, for $c$ sufficiently large and $\eps$
    sufficiently small, that
    \begin{align}
        \distM{\mtrA}{\queryB}{\pntC}%
        &\leq%
        \distM{\mtrB}{\mapped{\queryB}}{\mapped{\pntC}}%
        \leq%        
        \distM{\mtrB}{\mapped{\queryB}}{\mapped{\annA}} +
        \distM{\mtrB}{\mapped{\annA}}{\mapped{\pntC}}
        \leq%
        3\distM{\mtrA}{\queryB}{\annA} 
        + \diameterY{\mtrB}{\mapped{A_i}}
        \nonumber \\%
        &%
        \leq%
        3 \eps \LDist{i} + \frac{\LDist{i}}{c} %
        < %
        \frac{\LDist{i}}{240}. \nonumber
    \end{align}
    We also have by the triangle inequality,
    \begin{align*}
        \distM{\mtrA}{\queryB}{\annAf}%
        \leq%
        \distM{\mtrA}{\queryB}{\annA} + \distM{\mtrA}{\annA}{\annAf}
        \leq%
        \eps \LDist{i} + \frac{\eps}{3} \distA_1 %
        \leq%
        \eps\LDist{i} + \frac{\eps^2}{3} \LDist{i} %
        <%
        \frac{\LDist{i}}{240},
    \end{align*}
    since $\distM{\mtrA}{\annA}{\annAf} \leq \eps \distA_1 /3 $ and by
    \lemref{qswsp2}.  As such, for sufficiently large $c$ and
    small $\eps$, we have
    \begin{align}
        \eqlab{q:2:u} \max_{\pntC \in \modA_i}
        \distM{\mtrA}{\queryB}{\pntC} < \frac{\LDist{i}}{240}.
    \end{align}
    On the other hand, for any $\pntD \in B_i$, we have by
    \lemref{qswsp2} \lempntref{q:s:C} that
    $\distM{\mtrA}{\queryB}{\pntD} \geq \LDist{i}/120$. As such, by
    \Eqref{q:2:u}, we have
    \begin{align*}
        \distM{\mtrA}{\queryB}{B_i}%
        =%
        \min_{\pntD \in B_i} \distM{\mtrA}{\queryB}{\pntD}%
        \geq%
        \frac{\LDist{i}}{120}%
        =%
        2 \frac{\LDist{i}}{240}%
        >%
        2 \max_{\pntC \in \modA_i} \distM{\mtrA}{\queryB}{\pntC}.
    \end{align*}
    We conclude that $B_i \cap \modA_i = \emptyset$.
\end{proof}

\begin{remark}%
    \remlab{technicality}%
    A subtle (but minor) technicality is that we require $\distB \neq
    0$, where $\distB = \distM{\mtrA}{\annA}{\annAf}$. This can be
    enforced by replicating every point of $\PntSet$, and assigning
    infinitesimally small positive to the distance between a point and
    its copy. Clearly, for this modified point set this condition
    holds.
\end{remark}

\begin{lemma}%
    \lemlab{qswsp4}%
    Let $\query, \queryB$ be two points in $\QuerySetA_i$, such that
    $\queryB$ was added after $\query$. For a sufficiently small
    $\eps$ and a sufficiently large $c$, we have that
    $\distM{\mtrA}{\query}{\queryB} > \eps \distA_2 / 5$.    
\end{lemma}

\begin{proof}
    We assume for the sake of contradiction that
    $\distM{\mtrA}{\query}{\queryB} \leq \eps \distA_2/5$.  Let
    $\annAB \in A_i$ be the $(1 + \eps/10)$-\ANN found by the
    algorithm for $\queryB$, and let $\annBB$ be the $(1 +
    \eps/10)$-\ANN of $\queryB$ in $\PntSet \setminus
    \ballExt{\mtrA}{\annAB}{\eps \distC/3}$, where $\distC =
    \distM{\mtrA}{\queryB}{\annAB}$.  We have
    \begin{align*}
        \distC = \distM{\mtrA}{\queryB}{\annAB}%
        \leq %
        \pth{1 + \frac{\eps}{10}} \distM{\mtrA}{\queryB}{\PntSet}%
        \leq%
        \pth{1 + \frac{\eps}{10}} \distM{\mtrA}{\queryB}{\annA}%
        \leq %
        \frac{5}{4 \eps} \pth{1 + \frac{\eps}{10}}\distB%
        <%
        \frac{3}{2 \eps} \distB,
    \end{align*}
    by \lemref{qswsp1} (i) and as $\annAB$ is a $(1+\eps/10)$-\ANN of
    $\queryB$ in $\PntSet$. The strict inequality follows under the assumption 
    $\distB > 0$, see \remref{technicality}.  As in \lemref{qswsp3} let 
    $\modA_i = A_i
    \cup \brc{\annAf}$. By \lemref{qswsp3}, we have
    \begin{align*}
        \distM{\mtrA}{\queryB}{B_i} >%
        (1 + \eps/10) \max_{\pntC \in \modA_i}
        \distM{\mtrA}{\queryB}{\pntC},
    \end{align*}
    as $\distM{\mtrA}{\queryB}{B_i} > 2 \max_{\pntC \in \modA_i}
    \distM{\mtrA}{\queryB}{\pntC}$.  If $\modA_i$ is not contained in
    $\ballExt{\mtrA}{\annAB}{\eps \distC/3}$, then there is a point in
    $\modA_i \setminus \ballExt{\mtrA}{\annAB}{\eps \distC/3}$ that
    is, by a factor of $(1+\eps/10)$, closer to $\queryB$ than
    $B_i$. But this implies that $\annBB \notin B_i$, and this is a
    contradiction to the definition of $\queryB$ ($\queryB$ by
    definition has $\annAB \in A_i$ and $\annBB \in B_i$).  Thus,
    $\modA_i$ is contained in $\ballExt{\mtrA}{\annAB}{\eps
       \distC/3}$.

    As such, we have $\annA \in A_i \subseteq \modA_i \subseteq
    \ballExt{\mtrA}{\annAB}{\eps \distC/3}$ (and, by definition
    $\annAf \in \modA_i$, and thus $\annAf$ also belongs to this
    ball). We conclude 
    \begin{align*}
        \distB %
        =%
        \distM{\mtrA}{\annA}{\annAf}%
        \leq%
        \distM{\mtrA}{\annA}{\annAB} + \distM{\mtrA}{\annAB}{\annAf}%
        \leq%
        2 \frac{\eps \distC}{3}%
        < %
        \frac{2\eps }{3}%
        \cdot \frac{3}{2 \eps} \distB = \distB ,
    \end{align*}
    for $\eps$ sufficiently small. This is a contradiction.
\end{proof}

\begin{lemma}%
    \lemlab{qsws}%
    Let $\query, \queryB$ be two points in $\QuerySetA_i$, such that
    $\queryB$ was added after $\query$. Then for sufficiently small
    $\eps$ and sufficiently large $c$ we have,
    $\distM{\mtrB}{\mapped{\query}}{\mapped{\queryB}} =
    \distM{\mtrA}{\query}{\queryB} > \eps \distA_2 / 5 =
    \ordergeq{\eps \LDist{i}}$.
\end{lemma}

\begin{proof}
    Since $\query, \queryB \in \manifold$ it follows from
    \lemref{embed:low:d:d} that
    $\distM{\mtrB}{\mapped{\query}}{\mapped{\queryB}} =
    \distM{\mtrA}{\query}{\queryB}$.  By \lemref{qswsp4}, we have
    $\distM{\mtrA}{\query}{\queryB} > \eps \distA_2 / 5$. From
    \lemref{distA2lb} it follows that $\eps \distA_2 / 5 = \ordergeq
    {\eps \LDist{i}}$.
\end{proof}

\begin{lemma}%
    \lemlab{qsub}%
    We have that $\max \pth{\cardin{\QuerySetA_i},
       \cardin{\QuerySetB_i}} = \eps^{-\porder{\dd}}$.
\end{lemma}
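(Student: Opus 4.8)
The plan is to prove the bound by a standard packing argument carried out inside the metric space $\mtrB$, using the three preceding lemmas; I will argue about $\QuerySet_{i,1}$, and the bound for $\QuerySet_{i,2}$ follows by the symmetric argument (swapping the roles of $\mapped{A_i}$ and $\mapped{B_i}$ throughout). The first step is to turn \lemref{qsws} into a \emph{uniform} lower bound on pairwise distances. \lemref{qsws} says that for $\query,\queryB \in \QuerySet_{i,1}$ with $\queryB$ arriving after $\query$ we have $\distM{\mtrB}{\mapped{\query}}{\mapped{\queryB}} = \dist{\query}{\queryB} \geq \eps\distA_2/5$, but there $\distA_2$ is the distance computed for the particular query $\query$. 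I would therefore invoke \lemref{distA2lb}, which gives the query-independent bound $\distA_2 \geq \LDist{i}/18$; combining the two, every pair of distinct points of $\mapped{\QuerySet_{i,1}}$ is at distance at least $\eps\LDist{i}/90$ in $\mtrB$, that is, $\mapped{\QuerySet_{i,1}}$ is an $(\eps\LDist{i}/90)$-separated set.

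Next I would bound the set from above. By \lemref{diamub}, $\diameter{\mapped{\QuerySet_{i,1}}} = \order{\LDist{i}/\eps}$, so all of $\mapped{\QuerySet_{i,1}}$ lies inside a single ball of $\mtrB$ of radius $R = \order{\LDist{i}/\eps}$, and hence its spread (aspect ratio) is $R/(\eps\LDist{i}/90) = \order{1/\eps^2}$. Finally I would invoke the doubling property of $\mtrB$: by \lemref{embed:lowdd}(C) its doubling dimension is at most $2\dd+2$, so by repeatedly halving the radius a logarithmic (that is, $\order{\log(1/\eps)}$) number of times, the ball of radius $R$ can be covered by $(R/\rho)^{\order{\dd}}$ balls of radius $\rho = \eps\LDist{i}/180$. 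Each such ball has diameter strictly less than the separation of $\mapped{\QuerySet_{i,1}}$, hence contains at most one of its points, which yields $\cardin{\QuerySet_{i,1}} = \cardin{\mapped{\QuerySet_{i,1}}} \leq (R/\rho)^{\order{\dd}} = \eps^{-\porder{\dd}}$, and symmetrically $\cardin{\QuerySet_{i,2}} = \eps^{-\porder{\dd}}$.

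I do not expect a genuine obstacle in this lemma: the substantive work lies in \lemref{diamub}, \lemref{distA2lb} and especially \lemref{qsws}, and what remains is the routine conversion of ``bounded spread in a doubling metric'' into ``few points''. The two things to be careful about are (i) that the separation bound is uniform over the whole set, which is exactly what \lemref{distA2lb} buys us, and (ii) that the packing must be performed in $\mtrB$ rather than in $\manifold$, since it is the $\order{\dd}$ doubling dimension of $\mtrB$ from \lemref{embed:lowdd} that controls the count. With this lemma in hand, the overall bound on the number of regions is then finished by combining it with \lemref{ftrub} together with the fact that the $c$-\WSPD of $\mapped{\PntSet}$ has only $c^{\porder{\dd}}n$ pairs, for a total of $n\,\eps^{-\porder{\dd}}$ regions.
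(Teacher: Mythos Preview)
Your proposal is correct and follows essentially the same approach as the paper: combine \lemref{diamub}, \lemref{distA2lb}, and \lemref{qsws} to bound the spread of $\mapped{\QuerySet_{i,1}}$ by $\order{1/\eps^2}$, then use the $\order{\dd}$ doubling dimension to conclude the packing bound. One small remark: your insistence that the packing ``must be performed in $\mtrB$ rather than in $\manifold$'' is stronger than necessary, since all query points lie in $\manifold$ and by \lemref{embed:lowdd}(A) the embedding is an isometry on $\manifold$; the packing could just as well be carried out directly in $\manifold$, which already has doubling dimension $\dd$.
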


\begin{proof}
    From \lemref{diamub} and \lemref{qsws} it
    follows that the spread of the set $\mapped{\QuerySetA_i}$ is
    bounded by
    \begin{equation*}
        \order { \frac{\LDist{i}/\eps}{\eps \LDist{i}}}%
        =%
        \order{\frac{1}{ \eps^2}}.
    \end{equation*}
    
    Since $\mapped{\QuerySetA_i} \subseteq \mtrB$ which is a space of doubling
    dimension $\porder{\dd}$ it follows that $\cardin{\mapped{\QuerySetA_i}} =
    \eps^{-\porder{\dd}}$. The same argument works for $\mapped{\QuerySetB_i}$.
    For any $\query \in \manifold, \mapped{\query} = (\query, 0)$ and it is easy
    to see that the mapping $\query \to \mapped{\query}$ is bijective. As such
    $\cardin{\mapped{\QuerySetA_i}} = \cardin{\QuerySetA_i}$,
    and similarly $\cardin{\mapped{\QuerySetB_i}} = \cardin{\QuerySetB_i}$, and
    the claimed bounds follow.
\end{proof}

The next lemma bounds the number of regions created.
\begin{lemma}
    The number of regions created by the algorithm is $n/
    \eps^{\porder{\dd}}$.
\end{lemma}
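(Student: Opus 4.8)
The plan is to split the regions created by the algorithm into the two
types identified above and bound each count separately. The regions of
the first type are already taken care of: \lemref{ftrub} shows there are
only $\eps^{-\porder{\dd}}$ of them, so essentially all the work is in
bounding the number of regions of the second type.

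For the second type, the first step is to observe that every such region
is created by a query point $\query$ for which both $\algonnA$ and
$\algonnB$ are defined, and that --- after the symbolic perturbation
enforcing general position, so that $\mapped{\algonnA} \ne
\mapped{\algonnB}$ as points of $\mtrB$ --- there is a pair
$\brc{\mapped{A_i},\mapped{B_i}}$ of the fixed $c$-\WSPD of
$\mapped{\PntSet}$ separating $\mapped{\algonnA}$ from $\mapped{\algonnB}$;
this is precisely the pair to whose query set $\QuerySet_{i,1}$ or
$\QuerySet_{i,2}$ the point $\query$ was assigned. Hence the number of
second type regions is at most $\sum_{i=1}^s \pth{
\cardin{\QuerySet_{i,1}} + \cardin{\QuerySet_{i,2}} }$. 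I would then plug
in \lemref{qsub}, which bounds each of $\cardin{\QuerySet_{i,1}}$ and
$\cardin{\QuerySet_{i,2}}$ by $\eps^{-\porder{\dd}}$, together with the
bound $s = c^{\porder{\dd}} n$ on the number of \WSPD pairs. Since $c$ is
an absolute constant and $\eps \le 1/2$, the factor $c^{\porder{\dd}}$ is
absorbed into $\eps^{-\porder{\dd}}$, so the number of second type
regions is $n \eps^{-\porder{\dd}}$. Adding the $\eps^{-\porder{\dd}}$
first type regions keeps the total at $n \eps^{-\porder{\dd}}$.

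There is no genuine obstacle remaining here: all the difficulty was
front-loaded into \lemref{qsws} and \lemref{qsub} --- the spread bound
for $\QuerySet_{i,1}$, with its lengthy case analysis and the $t \ne 0$
technicality --- and this lemma is just the bookkeeping that glues those
pieces together. The only two points that deserve a line of care are (i)
verifying that a second type region always yields a genuine \WSPD pair,
i.e.\ that $\mapped{\algonnA}$ and $\mapped{\algonnB}$ are distinct
points of $\mapped{\PntSet}$, which holds since $\algonnB \notin
\ball{\algonnA}{\eps \distA_1/3}$ with $\distA_1 > 0$; and (ii)
confirming that the $c^{\porder{\dd}}$ blow-up coming from the size of the
\WSPD stays inside the $\eps^{-\porder{\dd}}$ bound, which it does because
$c$ is a constant.
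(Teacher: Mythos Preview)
Your proposal is correct and follows essentially the same argument as the paper: bound the first type regions by \lemref{ftrub}, assign each second type region to a \WSPD pair via $(\mapped{\algonnA},\mapped{\algonnB})$, apply \lemref{qsub} to each $\QuerySet_{i,j}$, and multiply by the $s = c^{\porder{\dd}} n$ pairs. The two additional checks you spell out (distinctness of $\mapped{\algonnA}$ and $\mapped{\algonnB}$, and absorbing the constant $c$ into $\eps^{-\porder{\dd}}$) are exactly the small points the paper leaves implicit.
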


\begin{proof}
    As shown in \lemref{ftrub} the number of outer regions created is
    bounded by $\eps^{-\porder{\dd}}$. Consider an inner region
    $\RegionA_{\query}$. For this point $\query$ the algorithm found a
    valid $\annA$ and $\algoonnB$. Now from the definition of a \WSPD
    there is some $i$ such that $\mapped{\annA} \in \mapped{A_i},
    \mapped{\algoonnB} \in \mapped{B_i}$ or $\mapped{\annA} \in
    \mapped{B_i}, \mapped{\algoonnB} \in \mapped{A_i}$.  In other
    words there is some $i$ such that $\query \in \QuerySetA_i$ or
    $\query \in \QuerySetB_i$. As shown in \lemref{qsub} the size of
    each of these is bounded by $\eps^{-\porder{\dd}}$. Since the
    total number of such sets is $2m$ where $m = n c^{\porder{\dd}}$
    is the number of pairs of the \WSPD, it follows that the total
    number of inner regions created is bounded by
    $\pth[]{c/\eps}^{\porder{\dd}}n \leq n \eps^{-\porder{\dd}}$, for
    $\eps$ sufficiently small.
\end{proof}

\subsection{The result}

We summarize the result of this section.

\begin{theorem}
    The online algorithm presented in \figref{online-ann} always
    returns a $(1 + \eps)$-\ANN. If the query points are constrained
    to lie on a subspace of doubling dimension $\dd$, then the maximum
    number of regions created for the online \AVD by the algorithm
    throughout its execution is $n / \eps^{\porder{\dd}}$.
\end{theorem}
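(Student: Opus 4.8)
The plan is to obtain the theorem as a synthesis of the lemmas already proved in this section; the statement itself requires almost no new work beyond a careful case analysis and a summation.

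For the correctness claim, I would argue by cases on which branch of \algBuildAVD{} handles a given query $\query$. First, if $\dist{\query}{\pntA} \geq 2\diam' + 2\diam'/\eps$, then \lemref{too-far} already shows that the returned point $\pntA$ is a valid $(1+\eps)$-\ANN. Second, if some region $\RegionA \in \RegionSet$ contains $\query$, I would note that $\RegionA$ was created as $\RegionA_{\query'}$ for some earlier query $\query'$ with associated point $\algonnA$, and apply \lemref{region-good} with $\queryB = \query$ to conclude $\algonnA$ is a valid $(1+\eps)$-\ANN of $\query$. Third, in the remaining branch the algorithm explicitly returns a $(1+\eps/10)$-\ANN, which is trivially a $(1+\eps)$-\ANN (this is \lemref{no-region}); and if this branch additionally inserts a new region $\RegionA_{\query}$ labeled $\algonnA$, then \lemref{region-good} guarantees $\algonnA$ stays valid for every future query landing inside $\RegionA_{\query}$, so the invariant ``every stored region is labeled by a correct $(1+\eps)$-\ANN for all its points'' is preserved. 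This invariant is precisely what legitimizes the second case, closing the correctness argument.

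For the bound on the number of regions, I would split them into the two types. Regions of the first type are counted directly by \lemref{ftrub} as $\eps^{-\porder{\dd}}$, since their defining query points are pairwise at distance $\geq \diam'/4$ while all lying in a ball of radius $2\diam' + 2\diam'/\eps$ around $\pntA$. For the second type, I would fix the $c$-\WSPD $\{\{\mapped{A_i},\mapped{B_i}\}\}_{i=1}^s$ of $\mapped{\PntSet}$ in $\mtrB$ with $s = c^{\porder{\dd}} n$ pairs, and route each second-type query to the class $\QuerySet_{i,1}$ or $\QuerySet_{i,2}$ according to which side of the separating pair holds $\mapped{\algonnA}$ and $\mapped{\algonnB}$. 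Invoking \lemref{qsub}, each of the $2s$ classes has size $\eps^{-\porder{\dd}}$, so the second-type regions number $2 c^{\porder{\dd}} n \cdot \eps^{-\porder{\dd}}$, which is $n\eps^{-\porder{\dd}}$ once $c$ is absorbed into the $\porder{\dd}$ exponent. Adding the two counts gives the claimed $n/\eps^{\porder{\dd}}$.

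The only substantive ingredient behind this assembly is \lemref{qsub}, and inside it \lemref{qsws}: the claim that two members $\query,\queryB$ of the same class satisfy $\dist{\query}{\queryB} \geq \eps\distA_2/5$. This is the step that actually exploits the shape $\RegionA_{\query} = \ball{\query}{\eps\distA_2/5}\setminus\ball{\algonnA}{5\distB/4\eps}$: assuming $\queryB$ both lies in the outer ball of $\RegionA_{\query}$ and still creates its own region forces $\queryB$ into the inner ball $\ball{\algonnA}{5\distB/4\eps}$, and then---using the \WSPD separation, the lower bound $\distA_2 \geq \LDist{i}/18$ of \lemref{distA2lb}, and the slack $\distB \leq \eps\distA_1/3$---one shows $A_i \cup \{\pntB\}$ must lie inside $\ball{\pntE}{\eps\distC/3}$ for the \ANN $\pntE$ found for $\queryB$, which collapses $\distB$ strictly below $\distB$, a contradiction. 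Together with the diameter bound $\diameter{\mapped{\QuerySet_{i,1}}} = \order{\LDist{i}/\eps}$ of \lemref{diamub}, this caps each class's spread at $\order{1/\eps^2}$, and the doubling dimension of $\mtrB$ (at most $2\dd+2$, by \lemref{embed:lowdd}) turns bounded spread into the cardinality bound. So I expect the write-up of the theorem to be a short bookkeeping argument citing these lemmas, with all the real difficulty residing in \lemref{qsws}.
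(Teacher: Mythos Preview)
Your proposal is correct and follows essentially the same route as the paper: the theorem is stated there as a summary of the preceding lemmas, and your case analysis for correctness (\lemref{too-far}, \lemref{no-region}, \lemref{region-good}) together with your region count via \lemref{ftrub} for first-type regions and the \WSPD-based charging to the classes $\QuerySet_{i,1},\QuerySet_{i,2}$ bounded by \lemref{qsub} for second-type regions is exactly the paper's argument. Your sketch of the contradiction inside \lemref{qsws} is also accurate.
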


% ----------------------------------------------------------------------
% ----------------------------------------------------------------------
% ----------------------------------------------------------------------

\section{Conclusions}
\seclab{conclusions}

In this paper, we considered the \ANN problem when the data points can
come from an arbitrary metric space (not necessarily an Euclidean
space) but the query points come from a subspace of low doubling
dimension. We demonstrate that this problem is inherently low
dimensional by providing fast \ANN data-structures obtained by
combining and extending ideas that were previously used to solve \ANN
for spaces with low doubling dimensions.

Interestingly, one can extend Assouad's type embedding to an embedding
that $(1+\eps)$-preserves distances from $\PntSet$ to $\manifold$ (see
\cite{hm-fcnld-06} for an example of a similar embedding into the
$\ell_\infty$ norm). This extension requires some work and is not
completely obvious.  The target dimension is roughly $1/\eps^{O(\dd)}$
in this case. If one restricts oneself to the case where both
$\PntSet$ and $\manifold$ are in Euclidean space, then it seems one
should be able to extend the embedding of Gottlieb and Krauthgamer
\cite{gk-nadr-11} to get a similar result, with the target dimension
having only polynomial dependency on $\dd$. However, computing either
embedding efficiently seems quite challenging. Furthermore, even if
the embedded points are given, the target dimension in both cases is
quite large, and yields results that are significantly weaker than the
ones presented here.

% SARIEL - new stuff

The on the fly construction of \AVD without any knowledge of the query
subspace (\secref{oann}) seems like a natural candidate for a
practical algorithm for \ANN. Such an implementation would require an
efficient way to perform point-location in the generated regions. We
leave the problem of developing such a data-structure as an open
question for further research. In particular, there might be a middle
ground between our two \ANN data-structures that yields an efficient
and practical \ANN data-structure while having very limited access to
the query subspace.

% ----------------------------------------------------------------------
% ----------------------------------------------------------------------

\bibliographystyle{alpha}%
\bibliography{low_dim_ann}%

\newcommand{\etalchar}[1]{$^{#1}$}
\begin{thebibliography}{dBCvKO08}

\bibitem[AI06]{ai-nohaa-06}
A.~Andoni and P.~Indyk.
\newblock Near-optimal hashing algorithms for approximate nearest neighbor in
  high dimensions.
\newblock In {\em Proc. 47th Annu. IEEE Sympos. Found. Comput. Sci.}, pages
  459--468, 2006.

\bibitem[AI08]{ai-nohaa-08}
A.~Andoni and P.~Indyk.
\newblock Near-optimal hashing algorithms for approximate nearest neighbor in
  high dimensions.
\newblock {\em Commun. ACM}, 51(1):117--122, 2008.

\bibitem[AM93]{am-rsps-93}
P.~K. Agarwal and J.~Matou\v{s}ek.
\newblock Ray shooting and parametric search.
\newblock {\em SIAM J. Comput.}, 22:540--570, 1993.

\bibitem[AM02]{am-lsavd-02}
S.~Arya and T.~Malamatos.
\newblock Linear-size approximate {Voronoi} diagrams.
\newblock In {\em Proc. 13th ACM-SIAM Sympos. Discrete Algorithms}, pages
  147--155, 2002.

\bibitem[AMM09]{amm-sttan-09}
S.~Arya, T.~Malamatos, and D.~M. Mount.
\newblock Space-time tradeoffs for approximate nearest neighbor searching.
\newblock {\em J. Assoc. Comput. Mach.}, 57(1):1--54, 2009.

\bibitem[AMN{\etalchar{+}}98]{amnsw-oaann-98}
S.~Arya, D.~M. Mount, N.~S. Netanyahu, R.~Silverman, and A.~Y. Wu.
\newblock An optimal algorithm for approximate nearest neighbor searching in
  fixed dimensions.
\newblock {\em J. Assoc. Comput. Mach.}, 45(6):891--923, 1998.

\bibitem[Ass83]{a-pldr-83}
P.~Assouad.
\newblock Plongements lipschitziens dans {${\bf R}\sp{n}$}.
\newblock {\em Bull. Soc. Math. France}, 111(4):429--448, 1983.

\bibitem[CK95]{ck-dmpsa-95}
P.~B. Callahan and S.~R. Kosaraju.
\newblock A decomposition of multidimensional point sets with applications to
  $k$-nearest-neighbors and $n$-body potential fields.
\newblock {\em J. Assoc. Comput. Mach.}, 42:67--90, 1995.

\bibitem[Cla88]{c-racpq-88}
K.~L. Clarkson.
\newblock A randomized algorithm for closest-point queries.
\newblock {\em SIAM J. Comput.}, 17:830--847, 1988.

\bibitem[Cla06]{c-nnsms-06}
K.~L. Clarkson.
\newblock Nearest-neighbor searching and metric space dimensions.
\newblock In G.~Shakhnarovich, T.~Darrell, and P.~Indyk, editors, {\em
  Nearest-Neighbor Methods for Learning and Vision: Theory and Practice}, pages
  15--59. MIT Press, 2006.

\bibitem[dBCvKO08]{bcko-cgaa-08}
M.~de~Berg, O.~Cheong, M.~van Kreveld, and M.~H. Overmars.
\newblock {\em Computational Geometry: Algorithms and Applications}.
\newblock Springer-Verlag, 3rd edition, 2008.

\bibitem[GK11]{gk-nadr-11}
L.A. Gottlieb and R.~Krauthgamer.
\newblock A nonlinear approach to dimension reduction.
\newblock In {\em Proc. 22nd ACM-SIAM Sympos. Discrete Algorithms}, pages
  888--899, 2011.

\bibitem[GKL03]{gkl-bgfld-03}
A.~Gupta, R.~Krauthgamer, and J.~R. Lee.
\newblock Bounded geometries, fractals, and low-distortion embeddings.
\newblock In {\em Proc. 44th Annu. IEEE Sympos. Found. Comput. Sci.}, pages
  534--543, 2003.

\bibitem[{Har}01]{h-rvdnl-01}
S.~{Har-Peled}.
\newblock A replacement for {Voronoi} diagrams of near linear size.
\newblock In {\em Proc. 42nd Annu. IEEE Sympos. Found. Comput. Sci.}, pages
  94--103, 2001.

\bibitem[{Har}11]{h-gaa-11}
S.~{Har-Peled}.
\newblock {\em Geometric Approximation Algorithms}.
\newblock Amer. Math. Soc., 2011.

\bibitem[Hei01]{h-lams-01}
J.~Heinonen.
\newblock {\em Lectures on analysis on metric spaces}.
\newblock Universitext. Springer-Verlag, New York, 2001.

\bibitem[HIM12]{him-anntr-12}
S.~{Har-Peled}, P.~Indyk, and R.~Motwani.
\newblock Approximate nearest neighbors: {Towards} removing the curse of
  dimensionality.
\newblock {\em Theory Comput.}, 8:321--350, 2012.
\newblock Special issue in honor of Rajeev Motwani.

\bibitem[HK11]{hk-annsl-11}
S.~{Har-Peled} and N.~Kumar.
\newblock Approximate nearest neighbor search for low dimensional queries.
\newblock In {\em Proc. 22nd ACM-SIAM Sympos. Discrete Algorithms}, pages
  854--867, 2011.

\bibitem[HKMR04]{hkmr-nnng-04}
K.~Hildrum, J.~Kubiatowicz, S.~Ma, and S.~Rao.
\newblock A note on the nearest neighbor in growth-restricted metrics.
\newblock In {\em Proc. 15th ACM-SIAM Sympos. Discrete Algorithms}, pages
  560--561. Society for Industrial and Applied Mathematics, 2004.

\bibitem[HM06]{hm-fcnld-06}
S.~{Har-Peled} and M.~Mendel.
\newblock Fast construction of nets in low dimensional metrics, and their
  applications.
\newblock {\em SIAM J. Comput.}, 35(5):1148--1184, 2006.

\bibitem[IM98]{im-anntr-98}
P.~Indyk and R.~Motwani.
\newblock Approximate nearest neighbors: {Towards} removing the curse of
  dimensionality.
\newblock In {\em Proc. 30th Annu. ACM Sympos. Theory Comput.}, pages 604--613,
  1998.

\bibitem[IN07]{in-nnpe-07}
P.~Indyk and A.~Naor.
\newblock Nearest neighbor preserving embeddings.
\newblock {\em ACM Trans. Algo.}, 3:1--12, 2007.

\bibitem[JL84]{jl-elmih-84}
W.~B. Johnson and J.~Lindenstrauss.
\newblock Extensions of lipschitz mapping into hilbert space.
\newblock {\em Contemporary Mathematics}, 26:189--206, 1984.

\bibitem[KL04]{kl-nnsap-04}
R.~Krauthgamer and J.~R. Lee.
\newblock Navigating nets: simple algorithms for proximity search.
\newblock In {\em Proc. 15th ACM-SIAM Sympos. Discrete Algorithms}, pages
  798--807. Society for Industrial and Applied Mathematics, 2004.

\bibitem[KOR00]{kor-esann-00}
E.~Kushilevitz, R.~Ostrovsky, and Y.~Rabani.
\newblock Efficient search for approximate nearest neighbor in high dimensional
  spaces.
\newblock {\em SIAM J. Comput.}, 2(30):457--474, 2000.

\bibitem[KR02]{kr-fnngr-02}
D.~R. Karger and M.~Ruhl.
\newblock Finding nearest neighbors in growth-restricted metrics.
\newblock In {\em Proc. 34th Annu. ACM Sympos. Theory Comput.}, pages 741--750,
  2002.

\bibitem[Mei93]{m-plah-93}
S.~Meiser.
\newblock Point location in arrangements of hyperplanes.
\newblock {\em Inform. Comput.}, 106:286--303, 1993.

\bibitem[MNP06]{mnp-lblsh-06}
R.~Motwani, A.~Naor, and R.~Panigrahi.
\newblock Lower bounds on locality sensitive hashing.
\newblock In {\em Proc. 22nd Annu. ACM Sympos. Comput. Geom.}, pages 154--157,
  2006.

\bibitem[OWZ11]{owz-olblsh-11}
R.~O'Donnell, Y.~Wu, and Y.~Zhou.
\newblock Optimal lower bounds for locality sensitive hashing (except when $q$
  is tiny).
\newblock In {\em Innov. Comp. Sci.}, pages 275--283, 2011.

\bibitem[Tal04]{t-beald-04}
K.~Talwar.
\newblock Bypassing the embedding: algorithms for low dimensional metrics.
\newblock In {\em Proc. 36th Annu. ACM Sympos. Theory Comput.}, pages 281--290,
  2004.

\end{thebibliography}
%\bibliography{shortcuts,geometry}%

\end{document}